\def\eqref#1{equation~\ref{#1}}
\def\ceil#1{\lceil #1 \rceil}
\def\floor#1{\lfloor #1 \rfloor}
\def\1{\bm{1}}
\def\vx{{\bm{x}}}
\def\vy{{\bm{y}}}
\def\mG{{\bm{G}}}
\def\mH{{\bm{H}}}
\def\mI{{\bm{I}}}
\def\mM{{\bm{M}}}
\def\mO{{\bm{O}}}
\def\mP{{\bm{P}}}
\def\mU{{\bm{U}}}
\def\mV{{\bm{V}}}
\def\mW{{\bm{W}}}
\def\mX{{\bm{X}}}
\def\mY{{\bm{Y}}}
\def\mZ{{\bm{Z}}}
\DeclareMathAlphabet{\mathsfit}{\encodingdefault}{\sfdefault}{m}{sl}
\SetMathAlphabet{\mathsfit}{bold}{\encodingdefault}{\sfdefault}{bx}{n}
\DeclareMathOperator*{\argmin}{arg\,min}
\DeclareMathOperator{\Tr}{Tr}
\theoremstyle{plain}
\newtheorem{theorem}{Theorem}[section]
\newtheorem{proposition}[theorem]{Proposition}
\newtheorem{lemma}[theorem]{Lemma}
\newtheorem{corollary}[theorem]{Corollary}
\theoremstyle{definition}
\theoremstyle{remark}
\newcommand{\ce}{\ensuremath{\mathrm{e}}}
\newcommand{\ci}{\ensuremath{\mathrm{i}}}
\newcommand{\cpi}{\ensuremath{\uppi}}
\newcommand{\change}{}
\crefname{section}{Section}{Section}
\crefname{subsection}{Section}{Section}
\crefname{figure}{Figure}{Figure}
\title{Supplemental Information}
\author[1]{Eric R. Anschuetz\footnote{\href{mailto:eans@mit.edu}{\texttt{eans@mit.edu}}}}
\author[2]{Bobak T. Kiani\footnote{\href{mailto:bkiani@mit.edu}{\texttt{bkiani@mit.edu}}}}
\affil[1]{MIT Center for Theoretical Physics}
\affil[2]{Department of Electrical Engineering and Computer Science, MIT}
\date{}
\begin{document}

\maketitle

\section{Training Error Dominates in the Optimization of Variational Quantum Algorithms}\label{app:training_error_dominates}

VQE is purely a problem of optimization and may appear unrelated to the challenges in learning via variational algorithms; however, by decomposing the error of a learning algorithm into key terms using well-established methods \cite{bottou201113}, we will show that variational learning algorithms essentially face the same optimization task and its associated challenges. In both cases, the hardness of learning or optimizing with variational circuits manifests itself in the challenges of optimizing over a cost landscape riddled with traps (or other barriers to optimization).

We restrict ourselves here to the supervised learning framework of empirical risk minimization, where our goal is to learn a space of input and output pairs $(\vx, \vy) \in \mathcal{X} \times \mathcal{Y}$ drawn from a distribution $P(\vx,\vy)$. Given a loss function $\ell: \mathcal{Y} \times \mathcal{Y} \to [0,\infty)$, we quantify how well our function performs by considering the expected risk $\mathcal{R}$:
\begin{equation}
    \mathcal{R}(f) = \mathbb{E}_{\vx}\left[\;\ell(f(\vx),f^*(\vx))\;\right],
\end{equation}
where the expectation above is taken with respect to $P(\vx, \vy)$. To benchmark performance, we compare to the ``optimal" or target function $f^*$ which is the minimizer of the risk:
\begin{equation}
    f^*(\vx) = \argmin_{\hat{\vy} \in \mathcal{Y}} \mathbb{E}\left[\ell(\vy,\hat{\vy})|\vx\right].
\end{equation}
To perform learning, we search for a function $\widehat{f} \in \mathcal{F}$ in the function class $\mathcal{F}$ (think \textit{e.g.,} the set of functions expressed by quantum neural networks). The expected risk $\mathcal{R}(f)$ is not something one can calculate since it requires access to the full probability distribution of the data. Instead, one minimizes the empirical risk $\widehat{\mathcal{R}}(f)$ (often named the training error) over a given training data set $\mathcal{D}$ of size $N$ consisting of pairs $\{\vx_i,\vy_i\}_{i=1}^N$:
\begin{equation}
    \widehat{\mathcal{R}}(f) = \sum_{i=1}^N \ell(f(\vx_i),f^*(\vx_i)).
\end{equation}

Note that we use the hat in $\widehat{\mathcal{R}}$ and $\widehat{f}$ to denote the expected risk measure and function that one actually has access to during training or optimization. Given the above, one can bound the expected risk of any function $\widehat{f}$ as a decomposition below \cite{bottou201113}:
\begin{equation}
\label{eq:error_decomp}
    \mathbb{E}\left[\mathcal{R}(\widehat{f}) - \mathcal{R}(f^*) \right]  \leq
    \underbrace{  \min_{f \in \mathcal{F}} \mathcal{R}(f) -  \mathcal{R}(f^*) }_{\text{approximation error}} + \underbrace{  2 \mathbb{E} \left[ \sup_{f \in \mathcal{F}} \left| \mathcal{R}(f) - \widehat{\mathcal{R}}(f) \right| \right]}_{\text{generalization error}} + \underbrace{ \mathbb{E}\left[ \widehat{\mathcal{R}}(\widehat{f}) - \min_{f \in \mathcal{F}} \widehat{\mathcal{R}}(f) \right] }_{\text{optimization error}},
\end{equation}
where the expectation above is taken with respect to the distribution over data sets or training sets. The proof of this statement follows by a careful, yet straightforward, application of additions/subtractions with corresponding bounds~\cite{bottou201113}.
\begin{proof}
Let $\widehat{f}_{\mathcal{F}} = \argmin_{f \in \mathcal{F}} \widehat{\mathcal{R}}(f)$ and $f_{\mathcal{F}} = \argmin_{f \in \mathcal{F}} \mathcal{R}(f)$. Then, by adding and subtracting quantities, we obtain the following result:
\begin{equation} \label{eq:proof_error_decomp_main}
\begin{split}
    \mathbb{E}\left[\mathcal{R}(\widehat{f}) - \mathcal{R}(f^*) \right] = \mathbb{E}\bigl[ & \mathcal{R}(\widehat{f})  - \mathcal{R}(f^*) \\
    &+ \widehat{\mathcal{R}}(\widehat{f}_{\mathcal{F}}) - \widehat{\mathcal{R}}(\widehat{f}_{\mathcal{F}}) \\
    &+ \mathcal{R}(f_{\mathcal{F}}) - \mathcal{R}(f_{\mathcal{F}}) + \widehat{\mathcal{R}}(f_{\mathcal{F}}) - \widehat{\mathcal{R}}(f_{\mathcal{F}}) \\
    &+ \widehat{\mathcal{R}}(\widehat{f}) - \widehat{\mathcal{R}}(\widehat{f}) \bigr] .
\end{split}
\end{equation}
We reorder the above as follows and note their relation to the main statement:
\begin{align}
    \mathbb{E}\left[\mathcal{R}(\widehat{f}) - \mathcal{R}(f^*) \right] = \; \; &\mathbb{E} \Bigl[  \mathcal{R}(f_{\mathcal{F}})  - \mathcal{R}(f^*) \Bigr] && \text{approximation error} \\
    +&\mathbb{E} \Bigl[  \mathcal{R}(\widehat{f}) - \widehat{\mathcal{R}}(\widehat{f}) \Bigr] && \text{generalization error} \\
    +& \mathbb{E} \Bigl[  \widehat{\mathcal{R}}(f_{\mathcal{F}}) - \mathcal{R}(f_{\mathcal{F}}) \Bigr] && \text{generalization error}   \\
    +& \mathbb{E} \Bigl[  \widehat{\mathcal{R}}(\widehat{f}_{\mathcal{F}}) - \widehat{\mathcal{R}}(f_{\mathcal{F}}) \Bigr] && \leq 0 \text{ since } \widehat{f}_{\mathcal{F}} \text{ minimizes }  \widehat{\mathcal{R}} \\
    +& \mathbb{E} \Bigl[  \widehat{\mathcal{R}}(\widehat{f}) - \widehat{\mathcal{R}}(\widehat{f}_{\mathcal{F}}) \Bigr]  && \text{optimization error}
\end{align}
For the quantities in the generalization error, we have since $\widehat{f}, f_{\mathcal{F}} \in \mathcal{F}$: 
\begin{equation}
    \begin{split}
        \mathbb{E} \Bigl[  \mathcal{R}(\widehat{f}) - \widehat{\mathcal{R}}(\widehat{f}) \Bigr] &\leq \mathbb{E} \left[ \sup_{f \in \mathcal{F}}\left| \mathcal{R}(f) - \widehat{\mathcal{R}}(f) \right| \right] \\
        \mathbb{E} \Bigl[  \widehat{\mathcal{R}}(f_{\mathcal{F}}) - \mathcal{R}(f_{\mathcal{F}}) \Bigr] &\leq \mathbb{E} \left[ \sup_{f \in \mathcal{F}}\left| \mathcal{R}(f) - \widehat{\mathcal{R}}(f) \right| \right] .
    \end{split}
\end{equation}

Plugging these into \eqref{eq:proof_error_decomp_main} and noting as before that $\mathbb{E} \Bigl[  \widehat{\mathcal{R}}(\widehat{f}_{\mathcal{F}}) - \widehat{\mathcal{R}}(f_{\mathcal{F}}) \Bigr] \leq 0$, we arrive at the desired result.

\end{proof}

In the context of quantum variational algorithms, each of these has the following properties:
\begin{itemize}
    \item The \textbf{approximation error} quantifies how well the most optimal function in the hypothesis class $\mathcal{F}$ can fit the function. In variational settings, the approximation error is typically bounded by assuming the target function is generated from a nice class of functions (e.g. shallow circuits) or arguing either analytically or theoretically that a given ansatz can (approximately) express the target function~\cite{nakaji2021expressibility,sim2019expressibility,du2020expressive,shen2020information}.
    \item The \textbf{generalization error} quantifies the statistical error that arises from having a finite data set and is typically insignificant in quantum variational algorithms where circuit complexity is limited with regards to the number of training samples. More precisely, for data sets of size $m$, previous work \cite{caro2021generalization,du2022efficient} bound the generalization error as $\tilde{O}(\sqrt{|G|/m})$ where $|G|$ is the number of trainable gates. In contrast, generalization error in heavily overparameterized classical neural network models are challenging to bound and it is still an open question why deep learning models generalize so well~\cite{zhang2021understanding,neyshabur2017exploring}.
    \item The \textbf{optimization error} measures how well one is able to reduce the empirical risk. Issues with optimization such as poor local minima and barren plateaus arise here. Note that there is a distinct difference between quantum and classical deep learning here. With classical deep neural networks, this quantity is typically negligible since neural networks are overparameterized with respect to the data set size and can fit random data arbitrarily well \cite{zhang2021understanding,maennel2020neural}. Furthermore, due to efficient means of calculating gradients with bit-level precision, classical machine learning algorithms perform optimization over parameters far more efficiently than quantum variational algorithms. In quantum variational models, overparameterization with respect to the Hilbert space dimension is generally needed to arbitrarily fit data \cite{anschuetz2022critical,wiersema2020exploring,kiani2020learning}. Since the Hilbert space dimension grows exponentially with the number of qubits, such overparameterization becomes prohibitive rather rapidly.
\end{itemize}

In summary, the approximation error and generalization error can be bounded efficiently with sufficient data so failures in learning are typically related to optimization over the empirical risk. As an aside, this is loosely analogous to the classical setting of learning polynomial size Boolean circuits which is strongly conjectured to be hard since the space of Boolean functions is challenging to search over \cite{applebaum2008basing}. 

Finally, we would like to stress that the decomposition of the excess risk performed in this section is neither unique nor necessarily tight. The decomposition can be performed in various other ways depending on the quantities one would like to bound. We chose the decomposition here to relate errors in quantum machine learning algorithms to their classical counterparts and to highlight the challenges one may face when attempting to \emph{provably} learn a target function class. 

\section{Statistical Query Framework: Background and Additional Details}
\label{app:SQ_review}

The statistical query (SQ) framework was introduced nearly 25 years ago to analyze the hardness of learning problems~\cite{kearns1998efficient}. This framework restricts algorithms to a series of noisy queries, and hardness results are stated in terms of the number of queries needed to learn a given class of functions. Since there are various different ways of defining the statistical query model---including a recent quantum oracular version proposed in \cite{arunachalam2020quantum}---let us first review some of the various models considered in prior work.
\begin{enumerate}
    \item \textbf{Classical statistical query model}: Introduced by \cite{kearns1998efficient}, this was the first statistical query model introduced. For a given distribution $D$ of inputs over an input space $X$ and target concept $c:X \to \{-1,+1\}$, one can make a statistical query $\operatorname{SQ}(q,\tau)$, by providing a threshold $\tau \in \mathbb{R}^+$ and a query function $q: X \times \{-1,+1\} \to \{-1,+1\}$. The query returns a value in the range:
    \begin{equation}
        \mathbb{E}_{x \sim D}\left[q(x,c(x)) -\tau \right] \leq \operatorname{SQ}(q,\tau) \leq \mathbb{E}_{x \sim D}\left[q(x,c(x)) +\tau \right].
    \end{equation}
    \item \textbf{Correlational statistical query model}: The query is the same as before, except now, one queries correlations $\operatorname{CSQ}(q,\tau)$ only by providing a threshold $\tau \in \mathbb{R}^+$ and a query function $h: X \to \{-1,+1\}$. The query returns a value in the range:
    \begin{equation}
        \mathbb{E}_{x \sim D}\left[h(x)c(x) -\tau \right] \leq \operatorname{CSQ}(h,\tau) \leq \mathbb{E}_{x \sim D}\left[h(x)c(x) +\tau \right].
    \end{equation}
    This model is strictly less powerful than the standard statistical query model since one can perform a correlational statistical query with a standard statistical query \cite{szorenyi2009characterizing}.
    \item \textbf{Quantum statistical query model}: This is a statistical query model with quantum samples \cite{arunachalam2020quantum}. Here, we are restricted to target (classical) Boolean functions $c:\{0,1\}^n\to\{-1,+1\}$. A quantum statistical query $\operatorname{Qstat}(\tau, M)$ is provided with a threshold $\tau \in \mathbb{R}^+$ and an observable or Hamiltonian $M \in (\mathbb{C}^2)^{n+1} \times (\mathbb{C}^2)^{n+1}$ satisfying $\|M\|\leq 1$ and returns a number in the range:
    \begin{equation}
         \bra{\psi_c}M\ket{\psi_c} - \tau \leq \operatorname{Qstat}(M,\tau) \leq \bra{\psi_c}M\ket{\psi_c} + \tau,
    \end{equation}
    where $\ket{\psi_c} = \sum_{x \in \{0,1\}^n} \sqrt{D(x)} \ket{x}\ket{c(x)} $. This model is useful to analyze the hardness of learning classical Boolean functions when given the extra power of querying the classical function in superposition. Our work considers learning quantum data and thus does not fit into the framework of this SQ model.
\end{enumerate}

The SQ learning setting is related to the probably approximately correct (PAC) setting of learning theory \cite{valiant1984theory} in that if an algorithm can learn a given function class in the SQ learning setting under any input distribution, then that function class is also PAC learnable \cite{kearns1998efficient,reyzin2020statistical}. Two very recent works have studied the SQ hardness of learning data generated by quantum circuits. First, \cite{hinsche2021learnability} analyze the hardness of learning the output distribution of clifford circuits and stabilizer states showing that these distributions are hard to learn using classical Boolean SQ oracles. Nevertheless, when given samples from the Boolean hypercube of the distribution, they provide an efficient algorithm based on linear regression to determine the stabilizer state underlying the distribution. Such a result is similar to classic results in \cite{kearns1998efficient} showing that parity functions are hard to learn using only SQ oracle calls but easy when performing linear regression with enough samples. Second, \cite{gollakota2022hardness} show that learning stabilizer states is hard in an SQ setting where queries are made over two-outcome POVMs. Their results show that learning stabilizer states in such a setting is as hard as learning the function class of parity with noise in the standard Boolean setting. Our results expand the set of quantum functions that are hard to learn in SQ settings and relate such hardness results to the variational setting.  

\subsection{Quantum Statistical Query Models}
Variational quantum algorithms are inherently noisy due to unavoidable sources such as the need for sampling outputs, or potentially correctable sources such as gate errors and state preparation noise. In such noisy settings, the statistical query (SQ) model provides a useful framework for quantifying the complexity of learning a class of functions by considering how many query calls to a noisy oracle are needed to learn any function in that class. As described in the main text, in the variational setting, we consider two forms of statistical queries which relate to learning a target Hamiltonian or a target unitary, both of which result in exponential hardness results for learning simple variational classes of data. 

Our proofs expand on recent research showing hardness results in the SQ setting for certain quantum machine learning problems. More specifically, recent results that have shown that certain fundamental and rather simple classes of quantum ``functions'' are hard to learn in the SQ setting. Namely, (classical) output distributions of locally constructed quantum states \cite{hinsche2021learnability} and the set of Clifford circuits \cite{gollakota2022hardness} are hard to learn given properly chosen statistical query oracles. Following these results, we show that simple classes of functions generated by variational circuits are also exponentially difficult to learn in the SQ settings we consider. We also directly connect the statistical query setting to actual optimization algorithms that are used in practice for variational optimization. Our results indicate that training algorithms must be carefully constructed to avoid these poor lower bounds.

\subsection{Limitations of Hardness Results in the SQ Framework}
Though the SQ framework is a useful tool for analyzing the hardness of learning a class of functions in noisy settings, there are a few caveats and limitations of any hardness results proven in the SQ setting:
\begin{itemize}
    \item The statistical query model inherently requires noise in the form of the tolerance $\tau$. Furthermore, the guarantees of learning must handle worst case noise scenarios where the noise acts adversarially on the statistical query. Though quantum variational algorithms are inherently noisy, this noise typically does not arise in an adversarial nature.
    \item The statistical query model places bounds on learning classes of functions using optimizers that query this SQ model and is not directly related to issues of loss landscapes since there is no loss landscape to actually optimize. Nevertheless, since (noisy) calculations of gradients and loss function values are themselves examples of statistical queries, any issues with optimizing over a loss landscape will also arise in performing the optimizer through a series of statistical queries.
    \item Learning every function in a class $\mathcal{C}$ can be restrictive, and in practice, one may only really want to learn a given function or a small set of functions. In fact, it can be shown that even the class of functions generated by shallow neural networks is hard to learn in the SQ setting \cite{diakonikolas2020algorithms,goel2020superpolynomial,goel2020statistical,chen2022hardness,diakonikolas2020near}; nevertheless, neural networks are very successful at learning specific functions such as the classification of real-world images \cite{lecun2015deep}.
    \item Specific to the settings considered here, our hardness results were obtained in the correlational SQ setting by constructing a family of orthogonal functions drawn from a given function class. We chose this setting for its close relation to the algorithms used in practice for performing optimization over variational parameters. However, as mentioned in the main text, the correlational SQ setting is strictly weaker than the more general SQ setting, and separations between SQ and correlational SQ results have been made in prior work~\cite{chen2022learning,andoni2014learning}.
\end{itemize}

\section{Proofs of Statistical Query Results}\label{app:sq_learning_proofs}

Throughout this section, we make use of standard formulas from Weingarten calculus to integrate over Haar measure or $t$-designs \cite{napp2022quantifying,collins2003moments,collins2006integration}. Let $\ket{I_{m}^n}$ denote $n$ copies of the unnormalized maximally mixed state on a Hilbert space of dimension $m$:
\begin{equation}
\label{eq:max_entangle_state}
    \ket{I_m^n} = \sum_{i_1, i_2, \dots, i_n=1}^m \ket{i_1, i_2, \dots, i_n}\ket{i_1, i_2, \dots, i_n}.
\end{equation}
For $n=2$, let $\ket{S_m^2}$ denote the same unnormalized state as above with a swap operation applied to the second register:
\begin{equation}
\label{eq:max_entangle_state_swap}
\begin{split}
    \ket{S_m^2} &= \left( \mI \otimes \operatorname{SWAP} \right) \ket{I_m^2} \\
    &= \sum_{i_1, i_2 = 1}^m \ket{i_1, i_2} \ket{i_2, i_1}.
\end{split}
\end{equation}

The following hold over a distribution $\mathcal{D}$ that is a $2$-design over the unitary matrices of dimension $m$:
\begin{equation}
\label{eq:haar_moments}
    \begin{split}
        \mathbb{E}_{\mU \sim \mathcal{D}}\left[\mU \otimes \bar{\mU} \right] &= \frac{1}{m}  \ket{I_m^1} \bra{I_m^1} , \\
        \mathbb{E}_{\mU \sim \mathcal{D}}\left[\mU \otimes \mU \otimes \bar{\mU} \otimes \bar{\mU} \right] &= \frac{1}{m^2-1} \left( \ket{I_m^2}  \bra{I_m^2} + \ket{S_m^2}  \bra{S_m^2}  \right) - \frac{1}{m(m^2-1)} \left( \ket{I_m^2} \bra{S_m^2} + \ket{S_m^2} \bra{I_m^2} \right),
    \end{split}
\end{equation}
where $\bar{\mU}$ denotes the matrix with entries that are the complex conjugate of entries of $\mU$.

As a simple example of applying the techniques above, we show that for unitaries $\mU_*$ and $\mV$ of dimension $d^n$ (e.g., $d=2$ for qubits and $n$ is the number of qubits), $\mathbb{E}_{\rho \sim \mathcal{D}}\left[ \Re[ \Tr( \mU_*^\dagger \mV \rho ) ] \right] = d^{-n} \Re[ \Tr(\mU_*^\dagger \mV) ]$ whenever $\mathcal{D}$ forms a $1$-design. This is a crucial formula that we use in the evaluation of statistical queries to $\operatorname{qUSQ}$.
\begin{lemma} \label{lem:ave_qusq_2design}
For any distribution $\mathcal{D}$ that is a $1$-design over states of dimension $d^n$,
\begin{equation}
    \mathbb{E}_{\rho \sim \mathcal{D}}\left[ \Re[ \Tr( \mW^\dagger \mV \rho ) ] \right] = d^{-n} \Re[ \Tr(\mW^\dagger \mV) ].
\end{equation}
\end{lemma}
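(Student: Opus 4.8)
The plan is to reduce the claim to the single defining property of a state $1$-design, namely that its ensemble-averaged density matrix is the maximally mixed state. First I would recall that if $\mathcal{D}$ is a $1$-design over states of dimension $d^n$, then $\mathbb{E}_{\rho \sim \mathcal{D}}[\rho] = d^{-n}\mI$: by definition the first moment of $\mathcal{D}$ matches that of a Haar-random pure state, and $\int \ket{\psi}\bra{\psi}\,\mathrm{d}\psi = d^{-n}\mI$ by Schur's lemma (the average is unitarily invariant, hence proportional to the identity, and its trace is $1$).

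Next I would push the expectation through the trace using linearity. For the fixed matrix $\mW^\dagger \mV$, the map $\rho \mapsto \Tr(\mW^\dagger \mV \rho)$ is linear, so
\begin{equation}
    \mathbb{E}_{\rho \sim \mathcal{D}}\left[\Tr(\mW^\dagger \mV \rho)\right] = \Tr\!\left(\mW^\dagger \mV\, \mathbb{E}_{\rho \sim \mathcal{D}}[\rho]\right) = \Tr\!\left(\mW^\dagger \mV\, d^{-n}\mI\right) = d^{-n}\Tr(\mW^\dagger \mV).
\end{equation}

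Finally I would take real parts. Since $\mathcal{D}$ is a genuine probability distribution with real nonnegative weights, expectation is a real-linear operation on $\mathbb{C}$ and therefore commutes with $\Re[\cdot]$; applying this with the complex random variable $\Tr(\mW^\dagger \mV \rho)$, and using that $d^{-n}$ is a real scalar, gives
\begin{equation}
    \mathbb{E}_{\rho \sim \mathcal{D}}\left[\Re[\Tr(\mW^\dagger \mV \rho)]\right] = \Re\!\left[\mathbb{E}_{\rho \sim \mathcal{D}}[\Tr(\mW^\dagger \mV \rho)]\right] = \Re\!\left[d^{-n}\Tr(\mW^\dagger \mV)\right] = d^{-n}\Re[\Tr(\mW^\dagger \mV)].
\end{equation}
There is no genuine obstacle here; the only points requiring care are invoking the correct notion of $1$-design (a \emph{state} design, matching the first Haar moment rather than a unitary design) and recording the elementary identity $\int \ket{\psi}\bra{\psi}\,\mathrm{d}\psi = d^{-n}\mI$, which itself follows from Schur's lemma together with normalization.
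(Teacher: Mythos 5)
Your proof is correct. It takes a slightly more elementary route than the paper: you reduce everything to the single fact $\mathbb{E}_{\rho\sim\mathcal{D}}[\rho]=d^{-n}\mI$ (the defining first-moment property of a state $1$-design), then use linearity of $\rho\mapsto\Tr(\mW^\dagger\mV\rho)$ and real-linearity of the expectation to commute $\Re[\cdot]$ through. The paper instead rewrites $\rho=\mU\ket{0}\bra{0}\mU^\dagger$ for a $1$-design of unitaries and evaluates the average via the vectorized Weingarten identity $\mathbb{E}_{\mU}[\mU\otimes\bar{\mU}]=\tfrac{1}{m}\ket{I_m^1}\bra{I_m^1}$, contracting with $\bra{I_{d^n}^1}\bigl((\mW^\dagger\mV)\otimes\mI\bigr)$ to land on $d^{-n}\Re[\Tr(\mW^\dagger\mV)]$. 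The two arguments encode the same first-moment fact; your version buys brevity and makes transparent that only the averaged density matrix matters, while the paper's tensor-network setup is the one that generalizes directly to the second-moment computation it needs later (its Lemma on average fidelity for $2$-designs), which is presumably why it adopts that formalism here as well. One small point worth keeping explicit, as you do, is that the hypothesis is a \emph{state} $1$-design rather than a unitary design; your appeal to Schur's lemma (unitary invariance of the Haar average plus normalization) to get $\int\ket{\psi}\bra{\psi}\,\mathrm{d}\psi=d^{-n}\mI$ is the right justification.
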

\begin{proof}
WLOG, we rewrite the equation above in terms of a distribution over pure states and with a slight abuse of notation, we let $\mathcal{D}$ also denote a distribution over unitary matrices $\mU$ that forms a $1$-design:
\begin{equation}
\begin{split}
    \mathbb{E}_{\rho \sim \mathcal{D}}\left[ \Re[ \Tr( \mW^\dagger \mV \rho ) ] \right] &= \mathbb{E}_{\mU \sim \mathcal{D}}\left[ \Re[ \bra{0} \mU^\dagger \mW^\dagger \mV \mU \ket{0} ] \right].
\end{split}    
\end{equation}

Using \eqref{eq:max_entangle_state}, we have:
\begin{equation}
\begin{split}
    \mathbb{E}_{\mU \sim \mathcal{D}}\left[ \Re[ \bra{0} \mU^\dagger \mW^\dagger \mV \mU \ket{0} ] \right]
    &=\mathbb{E}_{\mU \sim \mathcal{D}} \left[  \bra{I_{d^n}^1}\bigl( ( \mW^\dagger \mV) \otimes \mI \bigr) \bigl(\mU \otimes \bar{\mU} \bigr) \ket{0} \ket{0} \right].
\end{split}    
\end{equation}

Applying \eqref{eq:haar_moments}, we have:
\begin{equation}
\begin{split}
    \mathbb{E}_{\mU \sim \mathcal{D}}\left[ \Re[ \bra{0} \mU^\dagger \mW^\dagger \mV \mU \ket{0} ] \right]
    &=\mathbb{E}_{\mU \sim \mathcal{D}} \left[ \Re[  \bra{I_{d^n}^1}\bigl( ( \mW^\dagger \mV) \otimes \mI \bigr) \bigl(\mU \otimes \bar{\mU} \bigr) \ket{0} \ket{0} ]\right]\\
    &= \frac{1}{d^n} \Re[ \bra{I_{d^n}^1}\bigl( ( \mW^\dagger \mV) \otimes \mI \bigr) \ket{I_{d^n}^1} \bra{I_{d^n}^1} \ket{0} \ket{0} ] \\
    &= \frac{1}{d^n} \Re[ \Tr(\mW^\dagger \mV) ] .
\end{split}    
\end{equation}
\end{proof}

\subsection{Lower Bounds for Statistical Query Learning}
\label{app:sq_lower_bound_classical_proof}
For completeness, we include a proof of \Cref{thm:sq_lower_bound}, copied below, which lower bounds the number of queries needed to learn a hypothesis class. $\operatorname{qCSQ}$ and $\operatorname{qUSQ}$ both take the form of an inner product so the proof holds for both cases. As a reminder we include the definitions of $\operatorname{qCSQ}$ and $\operatorname{qUSQ}$ below.

\paragraph{Quantum correlational statistical query (qCSQ)} Given a target observable $\mM$ and a distribution $\mathcal{D}$ of input states, a query $\operatorname{qCSQ}(\mO,\tau)$ takes in a bounded observable $\mO$ with $\| \mO \| \leq 1$ and a tolerance $\tau$ and returns a value in the range:
\begin{equation}
    \mathbb{E}_{\rho \sim \mathcal{D}}\left[\operatorname{Tr}(\mO\rho) \operatorname{Tr}(\mM\rho) -\tau \right] \leq \operatorname{qCSQ}(\mO,\tau) \leq \mathbb{E}_{\rho \sim \mathcal{D}}\left[\operatorname{Tr}(\mO\rho) \operatorname{Tr}(\mM\rho) +\tau \right].
\end{equation}

\paragraph{Quantum unitary statistical query (qUSQ)} Given a target unitary transformation $\mU_*$ over a distribution $\mathcal{D}$ of inputs, the oracle $\operatorname{qUSQ}(\mV, \tau)$ takes in a unitary matrix $\mV$ and a tolerance $\tau$ and returns a value in the range:
\begin{equation}
    \mathbb{E}_{\rho \sim \mathcal{D}}\left[\Re[\operatorname{Tr}(\mU_*^\dagger \mV \rho)] -\tau \right] \leq \operatorname{qUSQ}(\mV,\tau) \leq \mathbb{E}_{\rho \sim \mathcal{D}}\left[\Re[\operatorname{Tr}(\mU_*^\dagger \mV \rho)] +\tau \right].
\end{equation}

Finally, we remind the reader of the definition of the statistical query dimension.
\begin{restatable}[Statistical query dimension \cite{blum1994weakly,reyzin2020statistical}]{definition}{SQdimDefinition}
For a distribution $\mathcal{D}$ and concept class $\mathcal{H}$ where $\|\mM\|_{\mathcal{D}}^2\leq C_{max}$ for all $\mM \in \mathcal{H}$, the statistical query dimension ($\operatorname{SQ-DIM}_{\mathcal{D}}(\mathcal{H})$) is the largest positive integer $d$ such that there exists $d$ observables $\mM_1, \mM_2, \dots, \mM_d \in \mathcal{H}$ such that for all $i \neq j:$ $| \langle \mM_i, \mM_j \rangle_{\mathcal{D}} | \leq C_{max}/d$.
\end{restatable}

From here, we provide a proof of the query complexity of learning a function class, similar to the proof in \cite{szorenyi2009characterizing}.

\begin{restatable}[Query complexity of learning~\cite{szorenyi2009characterizing,blum1994weakly}]{theorem}{SQlowerbound}
\label{thm:sq_lower_bound}
Given a distribution $\mathcal{D}$ on inputs and a hypothesis class $\mathcal{H}$ where $\| \mM \|_{\mathcal{D}}^2 \leq C_{max}$ for all $\mM \in \mathcal{H}$, let $d=\operatorname{SQ-DIM}_{\mathcal{D}}(\mathcal{H})$ be the statistical query dimension of $\mathcal{H}$. Any $\operatorname{qCSQ}$ or $\operatorname{qUSQ}$ learner making queries with tolerance $C_{max}\tau$ must make at least $(d \tau^2 -1 )/2 $ queries to learn $\mathcal{H}$ up to error $C_{max}\tau$.
\end{restatable}

\begin{proof}
Since we are restricted to the weaker setting of correlational statistical queries in this study, we can reuse a simple and elegant proof from \cite{szorenyi2009characterizing}.

Let $\mM_1, \mM_2, \cdots, \mM_d$ be $d$ functions that saturate $\operatorname{SQ-DIM}_{\mathcal{D}}(\mathcal{H})$, i.e., $\langle \mM_i, \mM_j \rangle_{\mathcal{D}} \leq 1/d$ for all $i \neq j$. Assume we apply query $\mO$ and let $S = \{i \in [d]: \langle \mO, \mM_i \rangle_{\mathcal{D}} > C_{max} \tau \}$ Then, by simple application of Cauchy-Schwarz, we have that for any query $\mO$:
\begin{equation}
    \begin{split}
        \left\langle \mO,  \sum_{i \in S} \mM_i \right\rangle_{\mathcal{D}}^2 & \leq C_{max}  \left\| \sum_{i \in S} \mM_i  \right\|_{\mathcal{D}}^2 \\
        &= C_{max} \sum_{i,j \in S} \langle \mM_i, \mM_j \rangle_{\mathcal{D}} \\
        & \leq C_{max}^2 \left( |S| + |S|^2 / d \right).
    \end{split}
\end{equation}
Note, that we can also bound the quantity above from below by using the definition of $S$:
\begin{equation}
    \left\langle \mO,  \sum_{i \in S} \mM_i \right\rangle_{\mathcal{D}} \geq C_{max} |S|\tau.
\end{equation}

Combining the above, we have that
\begin{equation}
    |S| \leq d / (d C_{max}^2 \tau^2 - 1 ).
\end{equation}

Similarly, defining $S'= \{i \in [d]: \langle \mO, \mM_i \rangle_{\mathcal{D}} < -C_{max}\tau \}$ with correlation less than $-\tau$, we follow the steps above to also note that $|S'| \leq d / (d C_{max}^2 \tau^2 - 1 )$. Altogether, we have that $|S'| + |S| \leq 2d / (d C_{max}^2 \tau^2 - 1 )$, which implies that each oracle call can only eliminate up to $2d / (d C_{max}^2 \tau^2 - 1 )$ functions. Since we must eliminate at least $d$ functions to learn the target class, we arrive at the desired bound. 
\end{proof}

\subsection{Proofs of Statistical Query Dimensions for Variational Function Classes}

\begin{proposition}[SQ dimension for $L=1$ and fixed global measurement]
\label{prop:sq_dim_l1_fixedglobal}
Given $n$ qubits, let $\mathcal{H}$ be the concept class containing functions $f:\mathbb{C}^{2^n} \to \mathbb{R}$ consisting of single qubit rotations followed by a global Pauli Z measurement, i.e. functions of the form
\begin{equation}
    f(\ket{\psi};  \mU_1, \mU_2, \dots, \mU_n) = \bra{\psi} \left( \mU_1^\dagger \otimes \mU_2^\dagger \otimes \cdots \otimes \mU_n^\dagger \right) \left( \mZ_1 \otimes \mZ_2 \otimes \cdots \mZ_n \right) \left( \mU_1 \otimes \mU_2 \otimes \cdots \otimes \mU_n \right) \ket{\psi},
\end{equation}
where $\ket{\psi}$ is the input to the function and $\mU_1, \mU_2, \dots, \mU_n$ are the parameterized $1$-qubit rotation operations on distinct qubits. Then, the concept class $\mathcal{H}$ has SQ dimension $\operatorname{SQ-DIM}_{\mathcal{D}}(\mathcal{H}) \geq 3^n$ under any distribution of states that forms a $2$-design.
\end{proposition}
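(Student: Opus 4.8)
The plan is to exhibit $3^n$ explicit members of $\mathcal{H}$ that are pairwise orthogonal with respect to the inner product $\langle\cdot,\cdot\rangle_{\mathcal{D}}$, whence the definition of the statistical query dimension immediately gives $\operatorname{SQ-DIM}_{\mathcal{D}}(\mathcal{H})\geq 3^n$. First I would identify each function $f(\,\cdot\,;\mU_1,\dots,\mU_n)\in\mathcal{H}$ with the Hermitian observable $\mM=\bigotimes_{i=1}^n \mU_i^\dagger \mZ_i \mU_i$, so that $f(\ket{\psi})=\bra{\psi}\mM\ket{\psi}$; the hypothesis class is literally this set of observables, and the definition of $\operatorname{SQ-DIM}$ is phrased for observables, so this is just bookkeeping. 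The key structural observation is that for a single qubit, as $\mU_i$ ranges over all $1$-qubit unitaries, $\mU_i^\dagger \mZ_i \mU_i$ ranges over the full conjugation orbit of $\mZ$, i.e. all operators $\vec n\cdot\vec\sigma$ with $\vec n$ a unit vector; in particular one can pick $\mU_i$ from a three-element set of single-qubit rotations so that $\mU_i^\dagger\mZ_i\mU_i$ equals $\mX_i$, $\mY_i$, or $\mZ_i$. Hence $\mathcal{H}$ contains all $3^n$ tensor products $\mP_1\otimes\cdots\otimes\mP_n$ with each $\mP_j\in\{\mX,\mY,\mZ\}$; these are $3^n$ distinct non-identity Pauli strings, each of operator norm $1$.

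Next I would compute $\langle \mM_i,\mM_j\rangle_{\mathcal{D}}=\mathbb{E}_{\rho\sim\mathcal{D}}[\Tr(\mM_i\rho)\Tr(\mM_j\rho)]$ for any two of these Pauli strings. Writing $D=2^n$ and using that $\mathcal{D}$ is a $2$-design over $D$-dimensional states, the second moment is $\mathbb{E}_{\rho\sim\mathcal{D}}[\rho\otimes\rho]=\tfrac{1}{D(D+1)}\!\left(\mathbb{I}+\operatorname{SWAP}\right)$ (this is exactly the content of \eqref{eq:haar_moments} applied to $\ket{0}^{\otimes 4}$, after pushing the $2$-design average through). Therefore
\begin{equation}
    \langle \mM_i,\mM_j\rangle_{\mathcal{D}}=\Tr\!\big((\mM_i\otimes\mM_j)\,\mathbb{E}_{\rho\sim\mathcal{D}}[\rho\otimes\rho]\big)=\frac{\Tr(\mM_i)\Tr(\mM_j)+\Tr(\mM_i\mM_j)}{D(D+1)}.
\end{equation}
Since every non-identity Pauli string is traceless, and the product of two distinct non-identity Pauli strings is again a traceless (nonzero) Pauli string up to a phase, the numerator vanishes whenever $i\neq j$, while for $i=j$ it equals $\Tr(\mathbb{I})=D$, giving $\|\mM_i\|_{\mathcal{D}}^2=\tfrac{1}{2^n+1}$. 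In particular all $3^n$ observables lie in $\mathcal{H}$ with $\|\mM_i\|_{\mathcal{D}}^2\leq C_{max}$ (by definition of $C_{max}$ as the maximum such value), and they are mutually $\mathcal{D}$-orthogonal.

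Finally, since $|\langle \mM_i,\mM_j\rangle_{\mathcal{D}}|=0\leq C_{max}/3^n$ for all $i\neq j$, the set $\{\mM_1,\dots,\mM_{3^n}\}$ certifies $\operatorname{SQ-DIM}_{\mathcal{D}}(\mathcal{H})\geq 3^n$, as claimed. I do not expect a genuine obstacle here: the one place that demands care is the $2$-design moment computation — extracting $\mathbb{E}_{\rho\sim\mathcal{D}}[\rho\otimes\rho]=\tfrac{1}{D(D+1)}(\mathbb{I}+\operatorname{SWAP})$ correctly from the stated Weingarten identity (which is written for $\mathbb{E}[\mU\otimes\mU\otimes\bar\mU\otimes\bar\mU]$) and confirming that the tracelessness of Pauli strings makes the off-diagonal inner products exactly zero, so that the full count $3^n$ (rather than some smaller quantity) is achieved. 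Everything else is routine.
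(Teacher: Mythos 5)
Your proposal is correct and follows essentially the same route as the paper: exhibit the $3^n$ Pauli strings in $\{\mX,\mY,\mZ\}^{\otimes n}$ as members of $\mathcal{H}$ via single-qubit conjugations of $\mZ$, then invoke their pairwise orthogonality under a $2$-design to certify the SQ dimension. The only difference is that you carry out the second-moment computation $\mathbb{E}_{\rho\sim\mathcal{D}}[\rho\otimes\rho]=\tfrac{1}{D(D+1)}(\mathbb{I}+\operatorname{SWAP})$ explicitly (correctly), whereas the paper simply asserts the orthogonality of distinct Paulis under a $2$-design.
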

\begin{proof}
The simple proof of this proposition relies on the fact that all Pauli operators are pairwise orthogonal for a $2$-design, i.e. given two distinct Pauli operators $\mP_1$ and $\mP_2$ then $\mathbb{E}_{\rho \sim \mathcal{D}}\left[\operatorname{Tr}(\mP_1\rho) \operatorname{Tr}(\mP_2\rho) \right] = 0$. Therefore, we simply show that the concept class $\mathcal{H}$ is capable of producing any Pauli string not containing the identity. 

To proceed, note that we can rewrite the function class as follows:
\begin{equation}
    f(\ket{\psi}; \mU_1, \mU_2, \dots, \mU_n) = \bra{\psi} (\mU_1^\dagger \mZ_1 \mU_1) \otimes (\mU_2^\dagger \mZ_2 \mU_2) \otimes \cdots \otimes ( \mU_n^\dagger \mZ_n \mU_n)  \ket{\psi}.
\end{equation}

To obtain any arbitrary Pauli string, we simply conjugate the $\mZ_i$ operator for the $i$-th qubit by a corresponding operation. If the $i$-th qubit of a Pauli string is equal to $\mX$, then we set $\mU_i=\mH$ or the Hadamard transform. Similarly, if the $i$-th qubit of a Pauli string is equal to $\mY$, then we set $\mU_i=\mH \sqrt{\mZ}^\dagger$. By conjugation of the individual 1-qubit operators, we thus can produce any Pauli operator in $\{ \mX, \mY, \mZ \}^{\otimes n}$.
\end{proof}

\begin{corollary} \label{cor:sq_lb_l1_fixedglobal}
By application of \Cref{thm:sq_lower_bound}, the class of functions defined in \Cref{prop:sq_dim_l1_fixedglobal} consisting of a single layer of parameterized single qubit unitary gates and a fixed global measurement on $n$ qubits requires $2^{\operatorname{\Omega}(n)}$ queries to learn for a query tolerance greater than $3^{-\beta n}$, where $\beta = 1/2 - \operatorname{\Omega}(1)$.
\end{corollary}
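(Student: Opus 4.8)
The plan is simply to feed \Cref{prop:sq_dim_l1_fixedglobal} into the generic SQ lower bound of \Cref{thm:sq_lower_bound}; the entire content is an arithmetic check, so there is no genuine obstacle here, only bookkeeping. First I would invoke \Cref{prop:sq_dim_l1_fixedglobal} to record that, for any input distribution $\mathcal{D}$ forming a $2$-design, the hypothesis class $\mathcal{H}$ of single-qubit-rotation circuits followed by a fixed global $\mZ^{\otimes n}$ measurement has $\operatorname{SQ-DIM}_{\mathcal{D}}(\mathcal{H}) \geq 3^n$, the $3^n$ non-identity Pauli strings serving as the witnessing family. Since each such string has $\|\mP\|_{\mathcal{D}}^2 = (2^n+1)^{-1}$ under a $2$-design and distinct strings are orthogonal, they legitimately saturate the SQ-dimension definition with respect to the relevant normalization $C_{max}$, which for traceless observables over a $2$-design is $\operatorname{\Theta}(2^{-n})$.

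Next I would apply \Cref{thm:sq_lower_bound} with $d \geq 3^n$: any $\operatorname{qCSQ}$ learner using queries of tolerance $C_{max}\tau$ and required to succeed to accuracy $C_{max}\tau$ must make at least $(d\tau^2 - 1)/2$ queries. Because this quantity is monotonically increasing in $\tau$, it suffices to evaluate it at the smallest permitted value $\tau = 3^{-\beta n}$:
\begin{equation}
    \frac{d\tau^2 - 1}{2} \;\geq\; \frac{3^{n}\,3^{-2\beta n} - 1}{2} \;=\; \frac{3^{(1-2\beta)n} - 1}{2}.
\end{equation}
Then I would use the hypothesis $\beta = 1/2 - \operatorname{\Omega}(1)$: writing $\beta = 1/2 - c$ with $c$ a positive constant gives $1 - 2\beta = 2c > 0$, hence $3^{(1-2\beta)n} = 3^{2cn} = 2^{(2c\log_2 3)\,n} = 2^{\operatorname{\Omega}(n)}$, so the query count is $2^{\operatorname{\Omega}(n)}$ as claimed.

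The only points worth spelling out — and the closest thing to a ``hard part'' — are the normalization conventions and the precise role of the exponent condition. The corollary's phrase ``query tolerance greater than $3^{-\beta n}$'' refers to the rescaled parameter $\tau$ appearing in \Cref{thm:sq_lower_bound} (with the genuine query tolerance being $C_{max}\tau$), and one should note that with $C_{max} = \operatorname{\Theta}(2^{-n})$ this only strengthens the conclusion if one instead demands the true tolerance be $\geq 3^{-\beta n}$. One should also remark that it is precisely the requirement $\beta = 1/2 - \operatorname{\Omega}(1)$, rather than the weaker $\beta < 1/2$, that forces $(1-2\beta)n$ to grow linearly in $n$ and hence yields a genuine $2^{\operatorname{\Omega}(n)}$ rather than a merely superpolynomial bound; finally $3^{-\beta n} > 3^{-n/2} = 1/\sqrt{d}$ in this regime, so $d\tau^2 > 1$ and the bound is non-vacuous throughout the stated range.
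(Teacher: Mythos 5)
Your proposal is correct and matches the paper's (implicit) argument: the paper proves this corollary simply by feeding the $3^n$ orthogonal Pauli witnesses from \Cref{prop:sq_dim_l1_fixedglobal} into \Cref{thm:sq_lower_bound} and evaluating $(d\tau^2-1)/2$ at $\tau \geq 3^{-\beta n}$ with $\beta = 1/2 - \operatorname{\Omega}(1)$, exactly as you do. Your added remarks on the $C_{max}$ normalization and on why $\beta = 1/2 - \operatorname{\Omega}(1)$ (rather than $\beta < 1/2$) is needed for a genuine $2^{\operatorname{\Omega}(n)}$ bound are consistent with the paper's conventions.
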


\begin{proposition}[SQ dimension for $L=\ceil{\log_2 n}$, two-qubit gates, and single Pauli $\mZ$ measurement]
\label{prop:sq_lb_logn_fixedsingle}
Given $n$ qubits, let $\mathcal{H}$ be the concept class containing functions $f:\mathbb{C}^{2^n} \to \mathbb{R}$ consisting of $\ceil{\log_2 n}$ layers of two-qubit gates followed by a Pauli $\mZ$ measurement on a single qubit. Then, the concept class $\mathcal{H}$ has SQ dimension $\operatorname{SQ-DIM}_{\mathcal{D}}(\mathcal{H}) \geq 4^n-1$ under any distribution of inputs that forms a $2$-design.
\end{proposition}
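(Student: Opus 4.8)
The plan is to follow the strategy of \Cref{prop:sq_dim_l1_fixedglobal}: show that $\mathcal{H}$ contains every nontrivial $n$-qubit Pauli operator, and then use that distinct nontrivial Paulis are orthogonal under any $2$-design. Concretely, for distinct nontrivial Paulis $\mP \ne \mQ$ one has $\langle \mP,\mQ\rangle_{\mathcal{D}} = \mathbb{E}_{\rho\sim\mathcal{D}}\!\left[\Tr(\mP\rho)\Tr(\mQ\rho)\right] = \tfrac{\Tr(\mP)\Tr(\mQ) + \Tr(\mP\mQ)}{2^n(2^n+1)} = 0$, since $\mP$, $\mQ$, and $\mP\mQ$ are all traceless; also $\|\mP\|_{\mathcal D}^2 \le C_{max}$ since $\|\mP\| = 1$. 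Hence if all $4^n - 1$ nontrivial Paulis lie in $\mathcal{H}$, they witness $| \langle \mM_i, \mM_j\rangle_{\mathcal D} | = 0 \le C_{max}/(4^n-1)$ for all $i \ne j$, giving $\operatorname{SQ-DIM}_{\mathcal D}(\mathcal{H}) \ge 4^n - 1$. (The all-identity Pauli is automatically excluded, since $\mU^\dagger \mZ_1 \mU$ is always traceless.)

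The real work is to realize an arbitrary nontrivial Pauli $\mP = \bigotimes_{i=1}^n \mP_i$ as $\mU^\dagger \mZ_1 \mU$ for some depth-$\ceil{\log_2 n}$ circuit $\mU$ of two-qubit gates. I would build $\mU$ so that the reverse light cone of qubit $1$ is a complete binary tree: in the layer nearest the observable, qubit $1$ couples to one fresh qubit; in the next layer those two qubits each couple to a fresh qubit; and so on, so that after $L := \ceil{\log_2 n}$ layers the light cone contains $2^L \ge n$ qubits (if $2^L > n$, the surplus branches carry identity gates). The layers are genuine layers of disjoint-support two-qubit gates, so the gates in the two subtrees hanging off the root gate may be chosen independently. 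The statement I would prove by induction on the tree depth $L \ge 1$ is: \emph{for a binary tree of two-qubit gates of depth $L$ rooted at a qubit $r$ and spanning a set $S$ of $2^L$ qubits, every non-identity single-qubit Pauli $A$ on $r$ can be conjugated through the tree into any product Pauli $Q = \bigotimes_{i\in S} Q_i \ne \mI$, by a suitable choice of the gates.} Taking $A = \mZ$, $r = 1$, $S \supseteq \{1,\dots,n\}$, and $Q = \mP$ (extended by the identity on $S \setminus \{1,\dots,n\}$) then produces the desired $\mU$.

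For the base case $L = 1$ the tree is a single two-qubit gate, and $A \otimes \mI$ and $Q$ are both \emph{nontrivial} two-qubit Paulis, hence share the spectrum $\{+1,+1,-1,-1\}$ and are therefore Clifford-conjugate; this is the one spot where I use that a single two-qubit gate can ``split'' a single-site Pauli into an arbitrary two-site product Pauli, including ones that are the identity on one of the two sites. For the inductive step, the root gate conjugates $A$ on $r$ into $A' \otimes B'$ across the two depth-$(L-1)$ subtrees (spanning $S_1 \ni r$ and $S_2$ with $S_1 \sqcup S_2 = S$); writing $Q = Q^{(1)} \otimes Q^{(2)}$ accordingly, I choose $A'$ (resp.\ $B'$) to be non-identity exactly when $Q^{(1)}$ (resp.\ $Q^{(2)}$) is — at least one must be, since $Q \ne \mI$, so $A' \otimes B' \ne \mI$ is a legal output of the root gate — apply the inductive hypothesis on each subtree whose target is non-identity, and use that a subtree fed the identity outputs the identity. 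I expect the main obstacle to be getting this induction to close: the hypothesis must be stated in the strengthened form above (arbitrary non-identity single-qubit Pauli at the sub-root, and targets allowed to be the identity on some sites), and one must check that the binary-tree light-cone construction really does fit in $\ceil{\log_2 n}$ layers of disjoint two-qubit gates. Everything else — the $2$-design orthogonality, the $C_{max}$ normalization, and the count $4^n - 1$ — is routine bookkeeping.
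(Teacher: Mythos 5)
Your proposal is correct and follows essentially the same route as the paper: realize every non-identity Pauli as $\mU^\dagger \mZ_1 \mU$ by fanning it onto the measured qubit through a depth-$\ceil{\log_2 n}$ binary-tree light cone of two-qubit gates, then invoke orthogonality of the $4^n-1$ nontrivial Paulis under a $2$-design. The only cosmetic difference is that you package the construction as an induction on the tree with a spectral/Clifford-conjugacy base case, whereas the paper writes out the gates explicitly (basis changes folded into the first layer, then swap/$\operatorname{CNOT}$-type merges of blocks of doubling size), so no gap remains beyond the minor bookkeeping you already flag for $2^L>n$.
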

\begin{proof}
We will show that $\mathcal{H}$ is powerful enough to perform any nontrivial Pauli measurement (i.e., any Pauli but the identity) and hence construct at least $4^n-1$ orthogonal functions. Classically, any parity function can be constructed in $\ceil{\log_2 n}$ layers, and we use a similar construction here. 

Without loss of generality, assume the Pauli measurement is on the first qubit. Let $\mU(\theta)$ represent a possible unitary that can be applied using the given hypothesis class, resulting in a final measurement of $\mU(\theta)^\dagger \mZ_1 \mU(\theta)$ on a given input state $\ket{\psi}$. We will show that we can parameterize the circuit such that for any Pauli measurement, $\mP_1 \otimes \mP_2 \otimes \cdots \otimes \mP_n = \mU(\theta)^\dagger \mZ_1 \mU(\theta)$ where $\mP_i$ indicates the Pauli operator of qubit $i$ (i.e., $\mP_i \in \{ \mI, \mX, \mY, \mZ \}$). 

To construct any Pauli operator $\mP_1 \otimes \mP_2 \otimes \cdots \otimes \mP_n$, we follow the steps below:
\begin{enumerate}
    \item In the first layer, apply a unitary to each qubit $i$ which maps the computational basis to the basis of the Pauli for qubit $i$. In more detail, if $\mP_i = \mI$ or $\mP_i = \mZ$, then apply the identity map to keep the basis the same. If $\mP_i = \mX$, then apply the Hadamard transform and if $\mP_i = \mY$ then apply the operation $\mH \sqrt{\mZ}^\dagger$.  
    \item In the $l$-th layer, apply a specific two qubit gate to qubit pairs $\{1, 2^{l-1}+1\}, \{2(2^{l-1})+1, 3(2^{l-1})+1\}, \{4(2^{l-1})+1, 5(2^{l-1})+1\}, \dots$. For a layer $l$ and a given pair $\{i,j\}$, apply the following gate:
    \begin{itemize}
        \item if all of $\mP_i, \mP_{i+1}, \dots, \mP_{j+2^{l-1}}$ are equal to $\mI$, then apply the identity.
        \item if any of $\mP_i, \mP_{i+1}, \dots, \mP_{j-1}$ are not equal to $\mI$ and all of  $\mP_j, \mP_{j+1}, \dots, \mP_{j+2^{l-1}}$ are equal to $\mI$ then apply the identity as well.
        \item if all of $\mP_i, \mP_{i+1}, \dots, \mP_{j-1}$ are equal to $\mI$ and any of $\mP_j, \mP_{j+1}, \dots, \mP_{j+2^{l-1}}$ are not equal to $\mI$, then apply a swap gate between qubits $i$ and $j$.
        \item otherwise, apply the following $2$-qubit gate to $i$ and $j$ which conjugates $\mZ \otimes \mI$ to $\mZ \otimes \mZ$:
        \begin{equation}
            \begin{bmatrix}
                    1 & 0 & 0 & 0 \\ 0 & 0 & 0 & 1 \\ 0 & 0 & 1 & 0 \\ 0 & 1 & 0 & 0
            \end{bmatrix}.
        \end{equation}
    \end{itemize} 
    \item repeat step 2 above starting from $l=1$ to $l=\ceil{\log_2 n}$. Measuring the first qubit will measure the corresponding desired Pauli. Note, that the single qubit operations of step 1 and the two-qubit operations of step 2 can be combined into a single 2-qubit gate thus not changing the depth.
\end{enumerate}

Following the steps above, at layer $l$, the measurement of the first qubit corresponds to the Pauli measurement of the first $2^l$ qubits. Recursively applying this procedure $l$ layers produces any arbitrary Pauli string.
\end{proof}

\begin{corollary} \label{cor:sq_lb_logn_fixedsingle}
By application of \Cref{thm:sq_lower_bound}, the class of functions defined in \Cref{prop:sq_lb_logn_fixedsingle} consisting of $\ceil{\log_2 n}$ two-qubit unitary gates and a fixed measurement on a single qubit requires $2^{\operatorname{\Omega}(n)}$ queries to learn for a query tolerance greater than $4^{-\beta n}$, where $\beta =  1/2 - \operatorname{\Omega}(1)$.
\end{corollary}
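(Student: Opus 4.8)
The plan is to derive this corollary directly from the general query--complexity lower bound \Cref{thm:sq_lower_bound}, applied to the SQ--dimension estimate of \Cref{prop:sq_lb_logn_fixedsingle}. The function class $\mathcal{H}$ of \Cref{prop:sq_lb_logn_fixedsingle} is specified by conjugating a single--qubit $\mZ$ measurement by a depth--$\ceil{\log_2 n}$ circuit of two--qubit gates, so it is an observable class and sits in the $\operatorname{qCSQ}$ setting. \Cref{prop:sq_lb_logn_fixedsingle} gives $d \equiv \operatorname{SQ-DIM}_{\mathcal{D}}(\mathcal{H}) \ge 4^n - 1$ for every distribution $\mathcal{D}$ that is a $2$-design. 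Feeding this value of $d$ into \Cref{thm:sq_lower_bound}, any $\operatorname{qCSQ}$ learner making queries of tolerance $C_{max}\tau$ and outputting an observable within $C_{max}\tau$ of the target in the $L_2$ norm must make at least $(d\tau^2-1)/2 \ge ((4^n-1)\tau^2 - 1)/2$ queries.

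It then remains to choose $\tau$ and estimate. Write $\beta = 1/2 - \operatorname{\Omega}(1)$ as $\beta \le 1/2 - c$ for a fixed constant $c > 0$ (the regime of interest being $0 < \beta < 1/2$), and take $\tau = 4^{-\beta n}$. Then, using $4^n - 1 \ge \tfrac12 4^n$ for $n$ large and $1 - 2\beta \ge 2c$,
\[
    d\tau^2 \;\ge\; (4^n - 1)\,4^{-2\beta n} \;\ge\; \tfrac12\, 4^{(1-2\beta)n} \;\ge\; \tfrac12\, 4^{2cn},
\]
which grows exponentially in $n$. Hence the number of required queries is at least $\tfrac12\!\left(\tfrac12 4^{2cn} - 1\right) = 2^{\operatorname{\Omega}(n)}$. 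Since the bound $(d\tau^2-1)/2$ in \Cref{thm:sq_lower_bound} is monotonically increasing in $\tau$, the same conclusion holds for every tolerance parameter $\tau \ge 4^{-\beta n}$, which is exactly the reading of ``query tolerance greater than $4^{-\beta n}$'' in the statement; the footnote of \Cref{tab:sq_dim} further records that $\tau$ equal to a constant times $C_{max}$ already suffices.

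There is no genuine obstacle here: all the real content lives in \Cref{prop:sq_lb_logn_fixedsingle}---the $\ceil{\log_2 n}$-depth, parity--style construction that realizes every nonidentity Pauli string with a depth--$\ceil{\log_2 n}$ circuit of two--qubit gates---which has already been established, together with \Cref{thm:sq_lower_bound}, which we may assume. The only point worth double--checking is that the hypothesis $\beta = 1/2 - \operatorname{\Omega}(1)$ is precisely what forces $d\tau^2$ to blow up: at $\beta = 1/2$ one has $d\tau^2 = \operatorname{\Theta}(1)$ and the lower bound of \Cref{thm:sq_lower_bound} degenerates, consistent with the fact that for sufficiently fine tolerances such classes could in principle be learned with few queries.
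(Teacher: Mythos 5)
Your proposal is correct and follows essentially the same route as the paper, which states this corollary as an immediate consequence of \Cref{thm:sq_lower_bound} combined with the bound $\operatorname{SQ-DIM}_{\mathcal{D}}(\mathcal{H}) \geq 4^n-1$ from \Cref{prop:sq_lb_logn_fixedsingle}; your explicit computation $d\tau^2 \geq \tfrac12 4^{(1-2\beta)n} = 2^{\operatorname{\Omega}(n)}$ for $\beta \leq 1/2 - c$ and the monotonicity remark for larger tolerances simply spell out what the paper leaves implicit.
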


\begin{proposition}[SQ dimension for $L$ layers, neighboring $2$-local gates in one-dimensional lattice and fixed single qubit measurement]
\label{prop:sq_dim_l_neighboring_singlequbit}
Given $n$ qubits, let $\mathcal{H}$ be the concept class containing functions $f:\mathbb{C}^{2^n} \to \mathbb{R}$ consisting of $L$ layers of $2$-qubit unitary operations followed by a Pauli Z measurement on a single qubit (labeled qubit $m$), i.e. functions of the form
\begin{equation}
    f(\ket{\psi}; \mW_1, \mW_2, \dots, \mW_L) = \bra{\psi} \mW_{1}^\dagger \mW_2^\dagger  \cdots \mW_L^\dagger \left( \mZ_m \right) \mW_L \cdots \mW_2 \mW_1 \ket{\psi},
\end{equation}
where $\ket{\psi}$ is the input to the function and $\mW_1, \mW_2, \dots, \mW_L$ are the unitary operations at each layer consisting of tensor products of $2$-local unitary operators acting on neighboring qubits. Then, the concept class $\mathcal{H}$ has SQ dimension $\operatorname{SQ-DIM}_{\mathcal{D}}(\mathcal{H}) \geq 4^{\min\left(2L,n\right)}-1$ under any distribution of states that forms a $2$-design.
\end{proposition}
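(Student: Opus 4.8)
The plan is to exhibit inside $\mathcal{H}$ at least $4^{\min(2L,n)}-1$ functions that are pairwise orthogonal under $\langle\cdot,\cdot\rangle_{\mathcal{D}}$; taking $d=4^{\min(2L,n)}-1$ in the definition of the statistical query dimension then gives the bound, since all off-diagonal inner products will in fact be zero, hence $\leq C_{max}/d$. As in the proof of Proposition~\ref{prop:sq_dim_l1_fixedglobal}, distinct non-identity Pauli observables $\mP\neq\mP'$ are $\langle\cdot,\cdot\rangle_{\mathcal{D}}$-orthogonal for any $2$-design: the $2$-design moment formula gives $\langle\mP,\mP'\rangle_{\mathcal{D}}=\tfrac{1}{2^n(2^n+1)}\bigl(\operatorname{Tr}(\mP)\operatorname{Tr}(\mP')+\operatorname{Tr}(\mP\mP')\bigr)$, and all three traces vanish (the last because $\mP\mP'$ is a non-identity Pauli up to a phase when $\mP\neq\mP'$). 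So it suffices to show that, for some contiguous block $B\ni m$ of $k:=\min(2L,n)$ qubits, every non-identity Pauli $\mP$ supported on $B$ can be written as $\mW_1^\dagger\cdots\mW_L^\dagger \mZ_m \mW_L\cdots\mW_1$ for an appropriate choice of the neighboring $2$-local layers.

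Equivalently, I must build a depth-$L$ Clifford circuit $\mV=\mW_L\cdots\mW_1$ of neighboring $2$-qubit gates with $\mV\mP\mV^\dagger=\mZ_m$, i.e.\ a circuit that ``funnels'' the target Pauli onto a single qubit. The construction is the one-dimensional, linear-depth analogue of the $\lceil\log_2 n\rceil$-depth binary-tree construction used in the proof of Proposition~\ref{prop:sq_lb_logn_fixedsingle}. First, single-qubit basis changes rotate the Pauli at each site of $B$ into $\mZ$ (or the identity), and, exactly as in that proof, these are absorbed into the $2$-qubit gates that follow them, so the depth is unaffected. Then two one-sided ``staircases'' of $\operatorname{CNOT}$- and $\operatorname{SWAP}$-type gates sweep the accumulated $\mZ$-parity from the part of $B$ to the left of $m$ and from the part to the right of $m$ inward toward $m$; the $\operatorname{SWAP}$s hop over sites of $B$ not in the support of $\mP$. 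Taking $B$ to extend $L$ sites to one side of $m$ and $L-1$ to the other, the two staircases occupy disjoint neighboring pairs in each of the $L$ layers (the only possible collision, at site $m$ in the final layer, is ruled out by this asymmetric choice), so $|B|=L+(L-1)+1=2L$, truncated to $n$ qubits when $n<2L$. Tracking which sites the growing $\mZ$-string occupies after each gate is then a routine induction confirming $\mV\mP\mV^\dagger=\mZ_m$.

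Assembling these pieces: each of the $4^k-1$ non-identity Paulis supported on $B$ is realized by a function in $\mathcal{H}$, they are pairwise $\langle\cdot,\cdot\rangle_{\mathcal{D}}$-orthogonal, and therefore $\operatorname{SQ-DIM}_{\mathcal{D}}(\mathcal{H})\geq 4^{k}-1=4^{\min(2L,n)}-1$, as claimed.

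The main obstacle is the scheduling argument: one has to verify that the left and right funnels, the single-qubit basis fixes, and the $\operatorname{SWAP}$ ``hops'' over out-of-support sites all pack into exactly $L$ layers of disjoint neighboring $2$-qubit gates while still permitting an arbitrary Pauli on the full width-$2L$ block to be targeted. In particular one must get the asymmetry right so the bound reads $2L$ rather than $2L+1$, and handle the behaviour of the light cone when $m$ is near the ends of the lattice (or work on a periodic lattice). The Pauli bookkeeping itself is mechanical once a valid schedule is pinned down; designing that schedule is the delicate step.
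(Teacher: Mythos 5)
Your proposal is correct and takes essentially the same route as the paper: realize every non-identity Pauli on an asymmetric $2L$-qubit block around the measured qubit by conjugating $\mZ_m$ with a depth-$L$ circuit built from CNOT/SWAP/identity choices (single-qubit basis changes absorbed into the $2$-local gates), then invoke pairwise orthogonality of Paulis under a $2$-design to get $4^{\min(2L,n)}-1$ orthogonal concepts. The scheduling you flag as the delicate step is exactly what the paper spells out---at each layer $l\neq L$ the gates sit on the light-cone edge pairs $\{-L+l-1,-L+l\}$ and $\{L-l-2,L-l-1\}$, with the central pair $\{0,-1\}$ handled in layer $L$---which matches your asymmetric choice of $L$ sites on one side of $m$ and $L-1$ on the other.
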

\begin{proof}
Our proof relies on the fact that with $L$ layers, one can conjugate the fixed single qubit measurement on qubit $m$ to produce any $2L$-qubit Pauli on the $2L$ qubits within the reverse light cone of $m$. We follow a proof outline similar to \Cref{prop:sq_lb_logn_fixedsingle}. 

Before we proceed, we assume without loss of generality, that $L$ is odd and the first layer applies a two qubit unitary to qubit $m$ and the preceeding qubit $m-1$. It is straightforward to extend this to the case where $L$ is even. Therefore, qubit $m$ is the $L$-th qubit in the reverse light cone of qubit $m$, i.e., the light cone traverses qubits $m-L$ to $m+L-1$. For the steps below, we then index the qubits from $-L$ to $L-1$ so that the numbering is relative to qubit $m$. To perform a given $2L$ Pauli operator $\mP_{-L} \otimes \mP_{-L+1} \otimes \cdots \otimes \mP_{L-1}$ in the reverse light cone of qubit $m$, we follow the steps below, many of which are copied from \Cref{prop:sq_lb_logn_fixedsingle}.: 
\begin{enumerate}
    \item In the first layer, apply a unitary to each qubit $i$ which maps the computational basis to the basis of the Pauli for qubit $i$. In more detail, if $\mP_i = \mI$ or $\mP_i = \mZ$, then apply the identity map to keep the basis the same. If $\mP_i = \mX$, then apply the Hadamard transform and if $\mP_i = \mY$ then apply the operation $\mH \sqrt{\mZ}^\dagger$.  
    \item in the $L$-th layer, for the $2$-qubit unitary acting on qubits $0$ and $-1$, apply the following gate:
    \begin{itemize}
        \item if all of $\mP_{-L}, \mP_{-L+1}, \dots, \mP_{-1}$ are equal to $\mI$ and all of $\mP_{1}, \mP_{2}, \dots, \mP_{L-1}$ are equal to $\mI$, then apply the identity.
        \item if any of $\mP_{-L}, \mP_{-L+1}, \dots, \mP_{-1}$ are not equal to $\mI$ and and any of $\mP_{1},\mP_{2}, \dots, \mP_{L-1}$  are not equal to $\mI$, apply the following $2$-qubit gate ($\operatorname{CNOT}$) to $i$ and $i+1$ which conjugates $\mI \otimes \mZ$ to $\mZ \otimes \mZ$:
        \begin{equation}
            \begin{bmatrix}
                    1 & 0 & 0 & 0 \\ 0 & 1 & 0 & 0 \\ 0 & 0 & 0 & 1 \\ 0 & 0 & 1 & 0
            \end{bmatrix}.
        \end{equation}
        \item otherwise, apply the swap operation between qubits $0$ and $-1$.
    \end{itemize}
    \item In the $l$-th layer for any $l \neq L$, apply a specific two qubit gate to neighboring qubit pairs $\{-L+l-1,-L+l\}$ on the edge of the reverse light cone. For simplicity, let $i=-L+1-1$ and apply the following gate to qubit pair $\{i,i+1\}$:
    \begin{itemize}
        \item if all of $\mP_{-L}, \mP_{-L+1}, \dots, \mP_{i}$ are equal to $\mI$, then apply the identity.
        \item if any of $\mP_{-L}, \mP_{-L+1}, \dots, \mP_{i}$ are not equal to $\mI$ and $\mP_{i+1}=\mI$, then apply a swap between qubits $i$ and $i+1$.
        \item otherwise, apply the following $2$-qubit gate ($\operatorname{CNOT}$) to $i$ and $i+1$ which conjugates $\mI \otimes \mZ$ to $\mZ \otimes \mZ$:
        \begin{equation}
            \begin{bmatrix}
                    1 & 0 & 0 & 0 \\ 0 & 1 & 0 & 0 \\ 0 & 0 & 0 & 1 \\ 0 & 0 & 1 & 0
            \end{bmatrix}.
        \end{equation}
    \end{itemize} 
    Similarly, for the other edge of the reverse light cone, we apply the same gates, but in ``reverse" logic. Here, we apply a $2$-qubit unitary to qubit pair $\{L-l-2, L-l-1\}$. For simplicity, let $i=L-l-2$ and apply the following gate to qubit pair $\{i,i+1\}$:
    \begin{itemize}
        \item if all of $\mP_{i+1}, \mP_{i+2}, \dots, \mP_{L-1}$ are equal to $\mI$, then apply the identity.
        \item if any of $\mP_{i+1}, \mP_{i+2}, \dots, \mP_{L-1}$ are not equal to $\mI$ and $\mP_{i+1}=\mI$, then apply a swap between qubits $i$ and $i+1$.
        \item otherwise, apply the following $2$-qubit gate to $i$ and $i+1$ which conjugates $\mZ \otimes \mI$ to $\mZ \otimes \mZ$:
        \begin{equation}
            \begin{bmatrix}
                    1 & 0 & 0 & 0 \\ 0 & 0 & 0 & 1 \\ 0 & 0 & 1 & 0 \\ 0 & 1 & 0 & 0
            \end{bmatrix}.
        \end{equation}
    \end{itemize} 
    \item repeat step 3 above starting from $l=1$ to $l=L-1$. Measuring the $m$-th qubit will measure the corresponding desired Pauli. Note, that the single qubit operations of step 1 and the two-qubit operations of step 2 can be combined into a single 2-qubit gate thus not changing the depth.
\end{enumerate}

\end{proof}

\begin{corollary} \label{cor:sq_lb_l_neighboring_singlequbit}
By application of \Cref{thm:sq_lower_bound}, the class of functions defined in \Cref{prop:sq_dim_l_neighboring_singlequbit} consisting of $L$ layers of neighboring $2$-qubit gates and a fixed measurement on a single qubit requires $2^{\Omega\left(\min\left(2L,n\right)\right)}$ queries to learn for a constant query tolerance that does not depend on $L$ and $n$.
\end{corollary}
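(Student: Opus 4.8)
The plan is to instantiate the generic statistical-query lower bound of \Cref{thm:sq_lower_bound} with the dimension estimate already established in \Cref{prop:sq_dim_l_neighboring_singlequbit}, so that the corollary follows by a short calculation. First I would record, straight from \Cref{prop:sq_dim_l_neighboring_singlequbit}, that $d := \operatorname{SQ-DIM}_{\mathcal{D}}(\mathcal{H}) \geq 4^{\min(2L,n)} - 1$ for any $\mathcal{D}$ forming a $2$-design; this is where all the content lives, since that proposition exhibits $4^{\min(2L,n)}-1$ pairwise $\langle\cdot,\cdot\rangle_{\mathcal{D}}$-orthogonal Pauli observables realized by conjugating the fixed $\mZ_m$ measurement through the depth-$L$ neighboring-gate circuit inside its $\min(2L,n)$-qubit reverse light cone. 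Second, I would fix a constant $\tau \in (0,1)$, say $\tau = 1/2$, chosen once and for all independently of $L$ and $n$. Third, I would apply \Cref{thm:sq_lower_bound}: any $\operatorname{qCSQ}$ learner answering queries to within tolerance $C_{max}\tau$ and outputting an observable within $L_2$-error $C_{max}\tau$ of the unknown target must make at least $(d\tau^2-1)/2$ queries.

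Substituting the bound on $d$ then gives a query count of at least $\bigl((4^{\min(2L,n)}-1)\tau^2 - 1\bigr)/2 = 2^{\operatorname{\Omega}(\min(2L,n))}$, which is the claimed lower bound; this single line handles both regimes uniformly, reducing to $2^{\operatorname{\Omega}(L)}$ when $2L \le n$ and to $2^{\operatorname{\Omega}(n)}$ when $2L > n$. Since $\tau$ was taken to be a fixed constant, the query tolerance $C_{max}\tau$ is a constant multiple of the class scale $C_{max}$, independent of $L$ and $n$, as asserted.

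I expect essentially no obstacle here beyond bookkeeping — the combinatorial work is entirely inside \Cref{prop:sq_dim_l_neighboring_singlequbit}, which I am free to assume. The one point worth a remark is why $\tau$ may be held constant: because $d \geq 4^{\min(2L,n)} - 1$ is already exponential in $\min(2L,n)$, multiplying by any positive constant $\tau^2$ keeps $d\tau^2$ exponential. In fact, exactly as in \Cref{cor:sq_lb_l1_fixedglobal} and \Cref{cor:sq_lb_logn_fixedsingle}, one could let $\tau$ decay as $4^{-\beta\min(2L,n)}$ for any constant $\beta < 1/2$ and still obtain a $2^{\operatorname{\Omega}(\min(2L,n))}$ bound; stating the corollary with a fixed $\tau$ is merely the cleanest formulation, and no machinery beyond \Cref{prop:sq_dim_l_neighboring_singlequbit} and \Cref{thm:sq_lower_bound} is required.
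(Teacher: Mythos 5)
Your proposal is correct and matches the paper's intended derivation exactly: the corollary is stated as a direct application of \Cref{thm:sq_lower_bound} to the dimension bound $d \geq 4^{\min(2L,n)}-1$ from \Cref{prop:sq_dim_l_neighboring_singlequbit}, with a constant $\tau$ making $(d\tau^2-1)/2 = 2^{\operatorname{\Omega}(\min(2L,n))}$. Your remark that $\tau$ could even be allowed to decay as $4^{-\beta\min(2L,n)}$ is a fine observation and consistent with how the other corollaries are phrased.
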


The above can be generalized to lower bound the statistical query dimension for circuits of $L$ layers on $d$-dimensional lattices as we show below. In $d$-dimensional lattices, since the light cone of a single qubit measurement grows at a rate of $L^d$ for an $L$ layers, we can prove that the statistical query dimension grows as $2^{\Omega\left(\min\left(2L,n^{1/d}\right)^d\right)}$.

\begin{figure*}[!ht]
    \centering
    \includegraphics[]{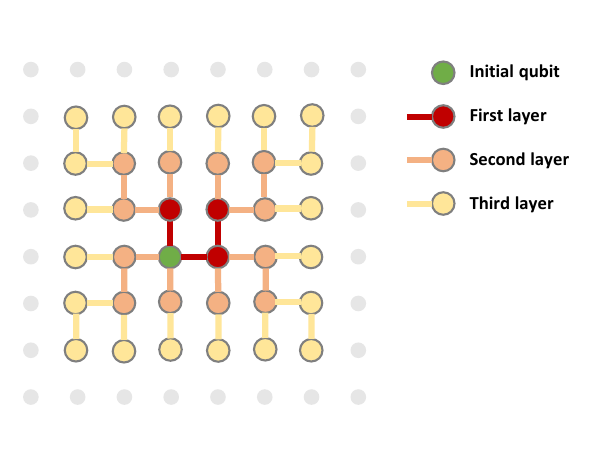}
    \caption{Growth of the light cone for a 2-dimensional lattice, where the initial qubit is the one that is measured. The size of the lattice grows with the perimeter of the light cone for each layer which consists of local $2$-qubit gates applied in each dimension. Each qubit is connected to a qubit in the edge of the light cone of the prior layer, forming a graph which is a tree rooted at the initial qubit.} \label{fig:lattice_appendix}
\end{figure*}

\begin{proposition}[SQ dimension for $L$ layers, neighboring $2$-local gates on $d$-dimensional lattice and fixed single qubit measurement]
\label{prop:sq_dim_l_neighboring_singlequbit_lattice}
Given $n$ qubits, let $\mathcal{H}$ be the concept class containing functions $f:\mathbb{C}^{2^n} \to \mathbb{R}$ consisting of $L$ layers of $2$-qubit unitary operations applied in each dimension followed by a Pauli Z measurement on a single qubit (labeled qubit $m$), i.e. functions of the form
\begin{equation}
    f(\ket{\psi}; \mW_1, \mW_2, \dots, \mW_L) = \bra{\psi} \mW_{1}^\dagger \mW_2^\dagger \cdots  \mW_L^\dagger \left( \mZ_m \right) \mW_L \cdots \mW_2 \mW_1 \ket{\psi},
\end{equation}
where $\ket{\psi}$ is the input to the function and $\mW_1, \mW_2, \dots, \mW_L$ are the unitary operations at each layer consisting of tensor products of $2$-local unitary operators acting along each dimension on neighboring qubits in a $d$-dimensional lattice. Then, the concept class $\mathcal{H}$ has SQ dimension $\operatorname{SQ-DIM}_{\mathcal{D}}(\mathcal{H}) = 2^{\Omega\left(\min\left(2L,n^{1/d}\right)^d\right)}$ under any distribution of states that forms a $2$-design.
\end{proposition}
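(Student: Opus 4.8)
The plan is to mirror the one-dimensional argument of \Cref{prop:sq_dim_l_neighboring_singlequbit}, with the interval light cone replaced by the $d$-dimensional light cone and the path it induces replaced by the tree of \Cref{fig:lattice_appendix}. First I would fix the geometry. With $L$ layers, each consisting of $2$-local gates applied once along each of the $d$ lattice directions, the reverse light cone of the measured qubit $m$ is contained in a $d$-dimensional box centered at $m$ whose side length grows by at most $2$ per layer in each direction and is capped by the lattice side $n^{1/d}$; choosing the directional sweeps and the brick pattern appropriately this bound is saturated, so the light cone contains $k=\Theta\!\left(\min\left(2L,n^{1/d}\right)^d\right)$ qubits. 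Moreover, as in \Cref{fig:lattice_appendix}, the gates that touch the light cone can be chosen so that each qubit entering the light cone at layer $l$ is coupled to a unique qubit already present at layer $l-1$; this makes the causal structure on the light cone a rooted tree $T$ on $k$ vertices with root $m$, each non-root vertex $v$ carrying the layer index $l(v)$ at which it first appears, with $l(v)$ strictly increasing along root-to-leaf paths and $l(v)\le L$ for all $v$.

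Next I would show that the hypothesis class realizes $\mW(\bm{\theta})^\dagger \mZ_m \mW(\bm{\theta})=\mP$ for every Pauli string $\mP=\bigotimes_i \mP_i$ supported inside the light cone (and equal to $\mI$ outside). The construction is the tree analogue of the recipe in \Cref{prop:sq_dim_l_neighboring_singlequbit}: (i) in the first layer apply per-qubit basis changes, namely the identity if $\mP_i\in\{\mI,\mZ\}$, the Hadamard $\mH$ if $\mP_i=\mX$, and $\mH\sqrt{\mZ}^\dagger$ if $\mP_i=\mY$, reducing the problem to building an arbitrary $\mZ$-type string; (ii) process $T$ from the leaves toward the root, and at the gate of layer $l$ acting on an edge $\{u,v\}$ of $T$ with $v$ closer to $m$, apply the identity if the subtree hanging below $u$ carries no $\mZ$, a swap if it carries $\mZ$ but the rest does not, and the two-qubit gate conjugating $\mZ\otimes\mI\mapsto\mZ\otimes\mZ$ (or $\mI\otimes\mZ\mapsto\mZ\otimes\mZ$, depending on orientation) otherwise, exactly as in the one-dimensional proof; (iii) because $l(v)$ is strictly increasing toward the leaves, processing each edge in the layer in which its lower endpoint appears is a valid schedule, and after $L$ layers the parity of all the $\mZ$'s has been accumulated onto qubit $m$. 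The single- and two-qubit operations at each site can be fused into one $2$-local gate per layer, so the circuit still has depth $L$.

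Finally I would count and invoke orthogonality. The construction exhibits all $4^{k}-1$ non-identity Pauli operators supported in the light cone as members of $\mathcal{H}$. Over any distribution $\mathcal{D}$ forming a $2$-design one has $\langle \mP, \mP'\rangle_{\mathcal{D}}=\mathbb{E}_{\rho\sim\mathcal{D}}[\operatorname{Tr}(\mP\rho)\operatorname{Tr}(\mP'\rho)]=0$ for distinct non-identity Paulis $\mP\neq\mP'$, since each is traceless and $\operatorname{Tr}(\mP\mP')=0$; hence these $4^k-1$ functions are pairwise orthogonal and in particular satisfy $|\langle \mP,\mP'\rangle_{\mathcal{D}}|\le C_{max}/d$ with $d=4^k-1$. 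Therefore $\operatorname{SQ-DIM}_{\mathcal{D}}(\mathcal{H})\ge 4^{k}-1 = 2^{\Omega\left(\min\left(2L,n^{1/d}\right)^d\right)}$, as claimed; feeding this into \Cref{thm:sq_lower_bound} yields the matching $2^{\Omega(\min(2L,n^{1/d})^d)}$ query lower bound at a constant tolerance, as recorded in \Cref{cor:sq_lb_l_neighboring_singlequbit_lattice}.

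The step I expect to be the main obstacle is the geometric setup together with the scheduling in (ii)–(iii): one must commit to an explicit layout of the $d$ directional sweeps within each layer, verify that the resulting light cone really is a box whose causal structure is a tree (no qubit reached by two distinct causal paths, so that the $\mZ$-parity bookkeeping is unambiguous), and check that the leaf-to-root processing order is consistent with the increasing layer indices. In one dimension all of this is essentially automatic because the light cone is an interval that grows by one site at each end per layer; in higher dimensions it requires care, but it is a finite combinatorial check and everything downstream is a routine adaptation of \Cref{prop:sq_dim_l_neighboring_singlequbit} and the orthogonality argument of \Cref{prop:sq_dim_l1_fixedglobal}.
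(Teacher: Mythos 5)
Your proposal is correct and follows essentially the same route as the paper's proof: identify the $\Theta\left(\min\left(2L,n^{1/d}\right)^d\right)$-qubit reverse light cone, organize its causal structure as a tree rooted at the measured qubit, realize any Pauli supported in the light cone via per-qubit basis changes plus identity/swap/CNOT-type choices along tree edges scheduled by layer, and conclude by orthogonality of distinct non-identity Paulis under a $2$-design. The scheduling and tree-structure concern you flag is handled in the paper exactly as you suggest, by defining each qubit's parent to be the unique light-cone qubit it is first coupled to, so no additional argument is needed.
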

\begin{proof}
Our proof relies on the fact that with $L$ layers, one can conjugate the fixed single qubit measurement on qubit $m$ to produce any Pauli on the $\operatorname{\Omega}(L^d)$ qubits within the reverse light cone of $m$. We follow a proof outline similar to \Cref{prop:sq_dim_l_neighboring_singlequbit}. 

To be more precise, let us introduce some notation. To perform any Pauli measurement in the reverse light cone at a given layer $l \in [L]$ indexed in reverse order, we apply gates to the perimeter of the reverse light cone at layer $l-1$. We assume there are $N_l$ qubits in the reverse light cone at layer $l$ and index these qubits from $1$ to $N_l$ to construct the Pauli $\mP_1 \otimes \mP_2 \otimes \cdots \otimes \mP_{N_l}$. Like in \Cref{prop:sq_dim_l_neighboring_singlequbit}, we grow the Pauli at each layer. 

To grow the light cone and properly choose the $2$-qubit gates, we construct a graph which is a tree where the parent of any qubit is the prior qubit which it was connected to in the light cone of the previous layer (see \Cref{fig:lattice_appendix} for an example). The root of the tree is the qubit which is being measured. For example, at layer $l=1$, the light cone is of size two in each dimension and the qubit being measured is the parent to the child node which it is connected to. To construct any pauli $\mP_1 \otimes \mP_2 \otimes \cdots \otimes \mP_{N_L}$, we follow the steps below:
\begin{enumerate}
    \item in the $l$-th layer, for all parent and child qubits $p$ and $c$ respectively connected in the tree at layer $l$, apply a unitary acting on qubits $p$ and $c$ as follows:
    \begin{itemize}
        \item if all of the qubits that are descendants of qubit $c$ and qubit $c$ itself have Pauli terms that are equal to $\mI$, then apply the identity gate between qubit $p$ and $c$.
        \item if any of the qubits that are descendants of qubit $c$ or qubit $c$ itself have Pauli terms that are not equal to $\mI$ and the Pauli term of qubit $p$ is equal to $\mI$, then apply the swap gate between $p$ and $c$.
        \item otherwise, apply the following $2$-qubit gate to qubits $p$ and $c$ which conjugates $\mZ \otimes \mI$ to $\mZ \otimes \mZ$:
        \begin{equation}
            \begin{bmatrix}
                    1 & 0 & 0 & 0 \\ 0 & 0 & 0 & 1 \\ 0 & 0 & 1 & 0 \\ 0 & 1 & 0 & 0
            \end{bmatrix}.
        \end{equation}
    \end{itemize} 
    \item repeat the step above starting from $l=1$ to $l=L$. 
    \item In the first layer ($l = L$), apply a unitary to each qubit $i$ which maps the computational basis to the basis of the Pauli for qubit $i$. In more detail, if $\mP_i = \mI$ or $\mP_i = \mZ$, then apply the identity map to keep the basis the same. If $\mP_i = \mX$, then apply the Hadamard transform and if $\mP_i = \mY$ then apply the operation $\mH \sqrt{\mZ}^\dagger$.  
\end{enumerate}
Following the above steps, measuring the single qubit will measure the corresponding desired Pauli. Note, that the single qubit operations of the first layer and the two-qubit operations of that layer can be combined into a single 2-qubit gate thus not changing the depth.

In each layer, $2$-qubit gates act along each dimension in some order. We can assume an ordering of the dimensions without loss of generality and assume that we apply gates along that dimension in order. After the first layer, the lattice has size $2$ along each dimension. For each layer thereafter, the lattice grows by $2$ qubits in each dimension (see \Cref{fig:lattice_appendix} for example). Therefore, the reverse light cone grows at a rate $\operatorname{\Omega}(L^d)$. Since the light cone can be at most of size $n$ (number of qubits), then the light cone is of size $2^{\operatorname{\Omega}\left(\min\left(2L,n^{1/d}\right)^d\right)}$ for all $L$ layers.
\end{proof}

\begin{corollary} \label{cor:sq_lb_l_neighboring_singlequbit_lattice}
By application of \Cref{thm:sq_lower_bound}, the class of functions defined in \Cref{prop:sq_dim_l_neighboring_singlequbit_lattice} consisting of $L$ layers of neighboring $2$-qubit gates and a fixed measurement on a single qubit requires $2^{\operatorname{\Omega}\left(\min\left(2L,n^{1/d}\right)^d\right)}$ queries to learn for a query tolerance that decays no faster than $2^{\operatorname{\upomega}\left(\min\left(2L,n^{1/d}\right)^d\right)}$. For $L \ll n$, this is equal to $2^{\operatorname{\Omega}(L^d)}$ for any constant query tolerance that does not depend on $L$ or $n$.
\end{corollary}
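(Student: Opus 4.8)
The plan is to obtain \Cref{cor:sq_lb_l_neighboring_singlequbit_lattice} by substituting the statistical-query-dimension estimate of \Cref{prop:sq_dim_l_neighboring_singlequbit_lattice} into the generic lower bound of \Cref{thm:sq_lower_bound}. Write $s := \min\left(2L,n^{1/d}\right)^d$ for the order of the number of qubits in the reverse light cone of the measured qubit $m$. \Cref{prop:sq_dim_l_neighboring_singlequbit_lattice} supplies a constant $c>0$ with $\operatorname{SQ-DIM}_{\mathcal{D}}(\mathcal{H})\geq 2^{cs}$ for all large $n$: the construction there shows $\mathcal{H}$ realizes, as a conjugated single-qubit measurement, every non-identity Pauli supported on the light cone, and there are $4^{\operatorname{\Omega}(s)}=2^{\operatorname{\Omega}(s)}$ of these; because $\mathcal{D}$ is a $2$-design, any two distinct such observables $\mM_i\neq\mM_j$ have $\langle\mM_i,\mM_j\rangle_{\mathcal{D}}=0\leq C_{max}/d$, so they exactly meet the requirement in the definition of the SQ dimension independently of the value of $C_{max}$.

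Next I would invoke \Cref{thm:sq_lower_bound} with $d:=\operatorname{SQ-DIM}_{\mathcal{D}}(\mathcal{H})\geq 2^{cs}$: every $\operatorname{qCSQ}$ learner that uses queries of tolerance $C_{max}\tau$ and learns $\mathcal{H}$ to error $C_{max}\tau$ must make at least $(d\tau^2-1)/2$ queries. All that is left is to check that $d\tau^2$ is exponential in $s$ in the stated tolerance regime: whenever $\tau$ is not exponentially small in $s$ — concretely, $\tau\geq 2^{-c's}$ for any fixed $c'<c/2$, which in particular covers every constant $\tau$ independent of $L$ and $n$ — we get $d\tau^2\geq 2^{(c-2c')s}=2^{\operatorname{\Omega}(s)}$, hence $(d\tau^2-1)/2=2^{\operatorname{\Omega}(s)}$. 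This is the claimed query complexity; the precise tolerance window quoted in the corollary is just a choice of how much slack to leave between $\log_2 d=cs$ and $2\log_2(1/\tau)$. Finally, for $L\ll n$ one has $2L<n^{1/d}$, so $s=(2L)^d=\operatorname{\Theta}(L^d)$ and the bound reads $2^{\operatorname{\Omega}(L^d)}$ for any constant tolerance, as stated.

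I do not expect a real obstacle: the corollary is essentially bookkeeping on top of the two cited results. The only two points needing a little care are (i) confirming that the $2$-design hypothesis makes the relevant Pauli observables exactly orthogonal in $\langle\cdot,\cdot\rangle_{\mathcal{D}}$ — so that they attain the SQ dimension even though $C_{max}$ itself can decay with $n$, which is exactly why \Cref{thm:sq_lower_bound} is phrased with tolerances normalized by $C_{max}$ — and (ii) tracking how the light-cone volume $\min\left(2L,n^{1/d}\right)^d$ interacts with the permitted decay of $\tau$, so that $d\tau^2$ stays superpolynomial (indeed exponential in $s$) rather than collapsing to a constant.
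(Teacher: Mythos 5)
Your proposal is correct and matches the paper's (implicit) argument: the corollary is obtained exactly as you describe, by feeding the light-cone Pauli construction of \Cref{prop:sq_dim_l_neighboring_singlequbit_lattice} (with exact orthogonality of distinct Paulis under a $2$-design) into \Cref{thm:sq_lower_bound} and checking that $d\tau^2$ stays exponential in $\min\left(2L,n^{1/d}\right)^d$ for the stated tolerance regime. Your bookkeeping on the tolerance window and the $L\ll n$ specialization is the same as what the paper intends, so there is nothing to add.
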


\begin{proposition}[SQ dimension for $L=1$, unitary compiling, and single qubit gates]
\label{prop:sqdim_1layer_unitary_compiling}
Given $n$ qubits, let $\mathcal{H}$ be the concept class containing unitary transformations $\mV:\mathbb{C}^{2^n} \to \mathbb{C}^{2^n}$ consisting of single qubit rotations in a single layer
\begin{equation}
    \mV(\ket{\psi}, \mU_1, \mU_2, \dots, \mU_n) = \mU_1 \otimes \mU_2 \otimes \cdots \otimes \mU_n \ket{\psi},
\end{equation}
where $\ket{\psi}$ is the input to the transformation and $\mU_1, \mU_2, \dots, \mU_n$ are the parameterized $1$-qubit operations. Then, the concept class $\mathcal{H}$ has SQ dimension $\operatorname{SQ-DIM}_{\mathcal{D}}(\mathcal{H}) \geq 4^n$ under the $\operatorname{qUSQ}$ model and any distribution $\mathcal{D}$ of inputs that is a $2$-design.
\end{proposition}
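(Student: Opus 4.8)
The plan is to exhibit $4^n$ elements of $\mathcal{H}$ that are pairwise orthogonal in the $\operatorname{qUSQ}$ inner product, which immediately gives $\operatorname{SQ-DIM}_{\mathcal{D}}(\mathcal{H}) \geq 4^n$ since $C_{max} = 1$ in the $\operatorname{qUSQ}$ model. The natural candidates are the $4^n$ Pauli strings $\mP = \mP_1 \otimes \mP_2 \otimes \cdots \otimes \mP_n$ with each $\mP_i \in \{\mI, \mX, \mY, \mZ\}$. Each such string is literally a tensor product of single-qubit unitaries, so it lies in $\mathcal{H}$ by choosing $\mU_i = \mP_i$ in the layer of one-qubit gates. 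This mirrors \Cref{prop:sq_dim_l1_fixedglobal}, where Pauli orthogonality was used in the $\operatorname{qCSQ}$ setting; here I use the analogous fact for the $\operatorname{qUSQ}$ inner product.

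Next I would evaluate the inner product. By \Cref{lem:ave_qusq_2design} (applicable since a $2$-design is in particular a $1$-design), for any two unitaries $\mU, \mV$ on $n$ qubits one has $\langle \mU, \mV \rangle_{\mathcal{D}} = \mathbb{E}_{\rho \sim \mathcal{D}}\left[\Re[\Tr(\mU^\dagger \mV \rho)]\right] = 2^{-n} \Re[\Tr(\mU^\dagger \mV)]$. Applying this to two Pauli strings $\mP, \mQ$, and using that Paulis are Hermitian together with the Hilbert--Schmidt orthogonality $\Tr(\mP \mQ) = 2^n \delta_{\mP, \mQ}$ (the product $\mP\mQ$ is a scalar times a Pauli string, which is traceless unless $\mP = \mQ$), one obtains $\langle \mP, \mQ \rangle_{\mathcal{D}} = \delta_{\mP, \mQ}$. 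In particular $\|\mP\|_{\mathcal{D}}^2 = 1 = C_{max}$, so each $\mP$ is an admissible concept, and for $\mP \neq \mQ$ we have $|\langle \mP, \mQ \rangle_{\mathcal{D}}| = 0 \leq C_{max}/4^n$, which is exactly the defining condition of the statistical query dimension for a family of size $4^n$. Hence $\operatorname{SQ-DIM}_{\mathcal{D}}(\mathcal{H}) \geq 4^n$.

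There is essentially no analytic obstacle here; the only points requiring care are (i) confirming that each Pauli string genuinely factorizes as a single-layer product of one-qubit gates, which is immediate, and (ii) invoking \Cref{lem:ave_qusq_2design} with the correct $2^{-n}$ normalization so that the off-diagonal inner products vanish exactly rather than merely being small. As with the earlier propositions, the associated query-complexity bound (\Cref{cor:sq_lb_unitary_single}) then follows by substituting $d = 4^n$ and $C_{max} = 1$ into \Cref{thm:sq_lower_bound}.
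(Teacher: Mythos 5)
Your proposal is correct and follows essentially the same route as the paper: realize all $4^n$ Pauli strings as single layers of one-qubit unitaries, invoke \Cref{lem:ave_qusq_2design} to reduce the $\operatorname{qUSQ}$ inner product to $2^{-n}\Re[\Tr(\mU^\dagger\mV)]$, and conclude from the Hilbert--Schmidt orthogonality of distinct Pauli strings. Your write-up is in fact slightly more explicit than the paper's (checking $\|\mP\|_{\mathcal{D}}^2 = C_{max}$ and the $C_{max}/4^n$ threshold), but there is no substantive difference.
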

\begin{proof}
From \Cref{lem:ave_qusq_2design}, we have that $\langle \mU, \mV \rangle_{\mathcal{D}} = 2^{-n} \Re\left[ \Tr(\mU^\dagger \mV) \right]$. With one layer of single qubit unitary operations, any Pauli matrix can be constructed. Since $\Tr(\mP_1 \mP_2)=0$ for any two distinct Pauli matrices $\mP_1$ and $\mP_2$, there are at least $4^n$ matrices in $\mathcal{H}$ which are orthogonal under the inner product. 
\end{proof}

\begin{corollary} \label{cor:sq_lb_unitary_single}
By application of \Cref{thm:sq_lower_bound}, the class of functions defined in \Cref{prop:sqdim_1layer_unitary_compiling} consisting of a single layer of single qubit unitaries requires $2^{\operatorname{\Omega}(n)}$ queries to learn for a query tolerance greater than $4^{-\beta n}$, where $\beta <  1/2 - \operatorname{\Omega}(1)$.
\end{corollary}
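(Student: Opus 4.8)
The plan is to obtain the corollary as a one-step specialization of \Cref{thm:sq_lower_bound} using the SQ-dimension bound already established in \Cref{prop:sqdim_1layer_unitary_compiling}. First I would record the two inputs. By \Cref{prop:sqdim_1layer_unitary_compiling}, under any input distribution $\mathcal{D}$ that is a $2$-design one has $d := \operatorname{SQ-DIM}_{\mathcal{D}}(\mathcal{H}) \geq 4^n$, since a single layer of single-qubit unitaries can realize all $4^n$ Pauli strings and these are mutually orthogonal in the $\operatorname{qUSQ}$ inner product (this uses $\Tr(\mathbf{P}_1^\dagger \mathbf{P}_2) = 0$ for distinct Paulis together with the evaluation $\langle \mathbf{U}, \mathbf{V}\rangle_{\mathcal{D}} = 2^{-n}\Re[\Tr(\mathbf{U}^\dagger \mathbf{V})]$ of \Cref{lem:ave_qusq_2design}). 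Second, I would note that in the $\operatorname{qUSQ}$ model $C_{max} = 1$, as observed immediately after the definition of $C_{max}$ (indeed $\|\mathbf{U}\|_{\mathcal{D}}^2 = 2^{-n}\Tr(\mathbf{U}^\dagger\mathbf{U}) = 1$ for every unitary), so there is no dimension-dependent normalization to carry around.

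With these in hand, \Cref{thm:sq_lower_bound} applies directly: for any tolerance parameter $\tau$, any $\operatorname{qUSQ}$ learner that makes queries of tolerance $C_{max}\tau = \tau$ and learns $\mathcal{H}$ to error $\tau$ must make at least $(d\tau^2 - 1)/2$ queries. Taking $\tau = 4^{-\beta n}$ and substituting $d \geq 4^n$ gives the lower bound
\begin{equation}
    \frac{4^n \cdot 4^{-2\beta n} - 1}{2} \;=\; \frac{4^{(1-2\beta)n} - 1}{2}.
\end{equation}
Because $(d\tau^2-1)/2$ is monotone increasing in $\tau$, the same (or a larger) bound holds for every tolerance $\tau \geq 4^{-\beta n}$, which is precisely the regime in the statement. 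Finally, writing the hypothesis $\beta < 1/2 - \operatorname{\Omega}(1)$ as $\beta \leq 1/2 - c$ for a constant $c > 0$, we get $1 - 2\beta \geq 2c$, so the exponent $(1-2\beta)n$ is $\operatorname{\Omega}(n)$ and the query count is $2^{\operatorname{\Omega}(n)}$, as claimed.

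I do not expect any real obstacle here: all of the substantive work lives in \Cref{prop:sqdim_1layer_unitary_compiling} (the explicit construction of the Pauli witnesses) and in \Cref{thm:sq_lower_bound} (the general SQ lower bound), both of which we may assume. The only points requiring any care are cosmetic: checking that the query-tolerance and error parameters are linked consistently with the way \Cref{thm:sq_lower_bound} is phrased, and verifying---via the monotonicity of $(d\tau^2-1)/2$ in $\tau$---that proving the bound at the boundary value $\tau = 4^{-\beta n}$ suffices to cover the entire claimed range $\tau \geq 4^{-\beta n}$.
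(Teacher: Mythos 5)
Your proposal is correct and follows exactly the route the paper intends: combine the bound $\operatorname{SQ-DIM}_{\mathcal{D}}(\mathcal{H})\geq 4^n$ from \Cref{prop:sqdim_1layer_unitary_compiling} with \Cref{thm:sq_lower_bound} (using $C_{max}=1$ in the $\operatorname{qUSQ}$ model) to get a query count of at least $\left(4^{(1-2\beta)n}-1\right)/2 = 2^{\operatorname{\Omega}(n)}$ when $\beta \leq 1/2-\operatorname{\Omega}(1)$. Your added remarks on the monotonicity of $(d\tau^2-1)/2$ in $\tau$ and on writing $\beta\leq 1/2-c$ are exactly the bookkeeping needed and introduce no gap.
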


\subsection{Swap Test via Statistical Queries}

In the task of unitary compiling, one is given copies of states which are inputs and outputs of a target unitary transformation, and the goal is to learn the unitary transformation from those states. More formally, we aim to learn a unitary $\mU_*$ given a distribution over inputs or a dataset of $m$ state pairs $\{ \ket{\phi_i}, \mU_* \ket{\phi_i} \}_{i \in [m]}$. 

One means of measuring overlaps between states is via the swap test \cite{buhrman2001quantum}. For pure states, $\ket{\phi}$ and $\ket{\psi}$, the swap test measures the fidelity $| \braket{\phi}{\psi} |^2$. The measured register in the swap test outputs $\ket{0}$ with probability $1/2+| \braket{\phi}{\psi} |^2/2$ and $\ket{1}$ otherwise. As we show in the main text, this quantity can be calculated using queries to $\operatorname{qUSQ}$. We use the helper lemma below to prove this fact.

\begin{lemma} \label{lem:ave_fidelity_sq_2design}
For any distribution $\mathcal{D}$ that is a $2$-design on a Hilbert space of dimension $m$,
\begin{equation}
    \mathbb{E}_{\rho \sim \mathcal{D}}\left[ \Tr( \mU_* \rho \mU_*^\dagger \mV \rho \mV^\dagger  ) \right] =\frac{ m^{-1} \left| \Tr(\mV^\dagger \mU_*) \right|^2 + 1}{m+1}.
\end{equation}
\end{lemma}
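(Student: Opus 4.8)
The plan is to reduce the left-hand side to a second-moment integral over a $2$-design and evaluate it with \eqref{eq:haar_moments}, mirroring the proof of \Cref{lem:ave_qusq_2design}. First I would simplify the integrand: setting $\mA \equiv \mU_*^\dagger \mV$ (which is unitary) and using cyclicity of the trace, $\Tr(\mU_* \rho \mU_*^\dagger \mV \rho \mV^\dagger) = \Tr(\rho\,\mA\,\rho\,\mA^\dagger)$, so the quantity to compute is $\mathbb{E}_{\rho\sim\mathcal{D}}[\Tr(\rho\,\mA\,\rho\,\mA^\dagger)]$. Since this is quadratic in $\rho$, its expectation depends only on the second moment $\mathbb{E}_{\rho\sim\mathcal{D}}[\rho\otimes\rho]$, so we may write $\rho = \mU\ket{0}\bra{0}\mU^\dagger$ with $\mU$ drawn from a unitary $2$-design, using the same abuse of notation as in \Cref{lem:ave_qusq_2design}; then $\Tr(\rho\,\mA\,\rho\,\mA^\dagger) = \bigl|\bra{0}\mU^\dagger\mA\mU\ket{0}\bigr|^2$, which also makes the stated identity manifestly coincide with the average-fidelity formula quoted in the main text once one specializes $m=2^n$.

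Next I would expand $\bigl|\bra{0}\mU^\dagger\mA\mU\ket{0}\bigr|^2$ as a degree-$(2,2)$ monomial in the entries of $\mU$. Following the template of \Cref{lem:ave_qusq_2design}, each factor can be written as $\bra{0}\mU^\dagger\mA\mU\ket{0} = \bra{I_m^1}\bigl(\mA\otimes\mI\bigr)(\mU\otimes\bar\mU)\ket{0}\ket{0}$, and its conjugate with $\mA^\dagger$ in place of $\mA$; multiplying the two and reordering tensor legs places $\mathbb{E}_{\mU\sim\mathcal{D}}[\mU\otimes\mU\otimes\bar\mU\otimes\bar\mU]$ in the middle, which by the second line of \eqref{eq:haar_moments} is a combination of the rank-one operators built from $\ket{I_m^2}$ and $\ket{S_m^2}$. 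Contracting against the remaining copies of $\ket{0}$ and using that $\bra{0}^{\otimes 2}$ has unit overlap with both $\ket{I_m^2}$ and $\ket{S_m^2}$, the operator content collapses to two scalars: one contraction pattern yields $\Tr(\mA)\Tr(\mA^\dagger) = \lvert\Tr(\mA)\rvert^2$ and the other yields $\Tr(\mA\mA^\dagger) = m$ by unitarity of $\mA$. Carrying the prefactors $\tfrac{1}{m^2-1}$ and $-\tfrac{1}{m(m^2-1)}$ through, the terms collapse to $\tfrac{\lvert\Tr(\mA)\rvert^2 + m}{m(m+1)} = \tfrac{m^{-1}\lvert\Tr(\mA)\rvert^2 + 1}{m+1}$, and since $\lvert\Tr(\mA)\rvert^2 = \lvert\Tr(\mV^\dagger\mU_*)\rvert^2$ this is exactly the claimed value.

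Equivalently, and with even less bookkeeping, one can use directly that for a state $2$-design $\mathbb{E}_{\rho\sim\mathcal{D}}[\rho\otimes\rho] = \tfrac{\mI + \operatorname{SWAP}}{m(m+1)}$ and write $\mathbb{E}_\rho[\Tr(\rho\,\mA\,\rho\,\mA^\dagger)] = \Tr\!\bigl[\mathbb{E}_\rho(\rho\otimes\rho)\,(\mA^\dagger\otimes\mA)\operatorname{SWAP}\bigr]$; the $\mI$-term contributes $\Tr(\mA^\dagger\mA) = m$ and the $\operatorname{SWAP}$-term contributes $\Tr(\mA)\Tr(\mA^\dagger) = \lvert\Tr(\mA)\rvert^2$, reproducing $\tfrac{m + \lvert\Tr(\mA)\rvert^2}{m(m+1)}$ in one line. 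I would probably present the \eqref{eq:haar_moments}-based derivation for consistency with \Cref{lem:ave_qusq_2design} and relegate this shortcut to a remark.

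The only real obstacle is bookkeeping: in the \eqref{eq:haar_moments}-based route one must track carefully which permutation of tensor factors turns the $(\mU\otimes\bar\mU)^{\otimes 2}$ pattern into the ordering of \eqref{eq:haar_moments}, and then correctly match each term to the scalars $\lvert\Tr(\mA)\rvert^2$ and $m$ with the right signs and $\tfrac{1}{m}$ prefactors — the spot where a transcription slip would corrupt a cross term. There is no analytic content beyond unitarity of $\mA$ (used for $\Tr(\mA\mA^\dagger)=m$) and the $2$-design property.
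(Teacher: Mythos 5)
Your proposal is correct and follows essentially the same route as the paper's proof: reduce to a pure-state ($2$-design) average of $\bigl|\bra{0}\mU^\dagger\mU_*^\dagger\mV\,\mU\ket{0}\bigr|^2$, vectorize via $\ket{I_m^2}$ and $\ket{S_m^2}$, apply the second line of \eqref{eq:haar_moments}, and contract to $\bigl(\tfrac{1}{m^2-1}-\tfrac{1}{m(m^2-1)}\bigr)\bigl(\lvert\Tr(\mV^\dagger\mU_*)\rvert^2+m\bigr)$, exactly as in the paper. Your one-line alternative via $\mathbb{E}_\rho[\rho\otimes\rho]=\tfrac{\mI+\operatorname{SWAP}}{m(m+1)}$ is also correct but is only a cosmetic repackaging of the same second-moment computation.
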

\begin{proof}
WLOG, we rewrite the equation above in terms of a distribution over pure states. Since mixed states are themselves distributions over pure states, this can always be done. Therefore, with a slight abuse of notation, we let $\mathcal{D}$ also denote a distribution over unitary matrices $\mU$ that forms a $2$-design:
\begin{equation}
\begin{split}
    \mathbb{E}_{\rho \sim \mathcal{D}}\left[ \Tr( \mU_* \rho \mU_*^\dagger \mV \rho \mV^\dagger  ) \right] &= \mathbb{E}_{\mU \sim \mathcal{D}}\left[ \bra{0} \mU^\dagger \mV^\dagger \mU_* \mU \ket{0}  \bra{0} \mU^\dagger \mU_*^\dagger \mV \mU \ket{0}  \right] \\
    &= \mathbb{E}_{\mU \sim \mathcal{D}}\left[ \left| \bra{0} \mU^\dagger \mV^\dagger \mU_* \mU \ket{0} \right|^2 \right].
\end{split}    
\end{equation}

Using \eqref{eq:max_entangle_state} and \eqref{eq:max_entangle_state_swap}, we have
\begin{equation}
\begin{split}
    \mathbb{E}_{\mU \sim \mathcal{D}}\left[ \left| \bra{0} \mU^\dagger \mV^\dagger \mU_* \mU \ket{0} \right|^2 \right] &=  \mathbb{E}_{\mU \sim \mathcal{D}}\left[ \bra{0} \mU^\dagger \mV^\dagger \mU_* \mU \ket{0} \bra{0} \mU^\dagger \mU_*^\dagger \mV \mU \ket{0} \right] \\
    &=\mathbb{E}_{\mU \sim \mathcal{D}} \left[  \bra{I_m^2}\bigl( (\mV^\dagger \mU_*) \otimes (\mU_*^\dagger \mV) \otimes \mI  \otimes \mI \bigr) \bigl(\mU \otimes \mU \otimes \bar{\mU} \otimes \bar{\mU} \bigr) \ket{0}^{\otimes 4} \right].
\end{split}    
\end{equation}

Applying \eqref{eq:haar_moments}, we have
\begin{equation}
\begin{split}
    \mathbb{E}_{\mU \sim \mathcal{D}}\left[ \left| \bra{0} \mU^\dagger \mV^\dagger \mU_* \mU \ket{0} \right|^2 \right] &=  \frac{1}{m^2-1} \bra{I_m^2} \bigl( (\mV^\dagger \mU_*) \otimes (\mU_*^\dagger \mV) \otimes \mI  \otimes \mI \bigr) \biggl( \ket{I_m^2}\bra{I_m^2} + \ket{S_m^2}\bra{S_m^2} \biggr) \ket{0}^{\otimes 4} \\
    & \;\;\;\;- \frac{1}{m(m^2-1)} \bra{I_m^2} \bigl( (\mV^\dagger \mU_*) \otimes (\mU_*^\dagger \mV) \otimes \mI  \otimes \mI \bigr) \biggl( \ket{I_m^2}\bra{S_m^2} + \ket{S_m^2}\bra{I_m^2} \biggr) \ket{0}^{\otimes 4} \\
    &= \frac{1}{m^2-1} \left( \Tr[\mV^\dagger \mU_*]\Tr[\mU_*^\dagger \mV] + \Tr[\mV^\dagger \mU_* \mU_*^\dagger \mV] \right)  \\
    &\;\;\;\;- \frac{1}{m(m^2-1)} \left( \Tr[\mV^\dagger \mU_*]\Tr[\mU_*^\dagger \mV] + \Tr[\mV^\dagger \mU_* \mU_*^\dagger \mV] \right) \\
    &= \left(\frac{1}{m^2-1} - \frac{1}{m(m^2-1)} \right) \left( \left| \Tr[\mV^\dagger \mU_* ] \right|^2 + m \right) \\
    &= \frac{m^{-1} \left| \Tr[\mV^\dagger \mU_* ] \right|^2 + 1}{m+1}.
\end{split}    
\end{equation}

\end{proof}

\section{Shallow VQAs as Random Fields}

\subsection{Random Fields on Manifolds}\label{app:rf_review}

Hardness results from barren plateaus or SQ models both intuitively arise from the exponential decay of quantities necessary to perform optimization. To analyze the shallow circuit setting beyond the SQ model---where such exponentially decaying quantities tend not to exist---we look toward models of variational loss landscapes as random fields on manifolds. This mirrors \cite{choromanska2015loss,chaudhari2017energy,anschuetz2022critical} in studying the loss landscapes of machine learning models via mapping to certain random fields which are easier to study analytically. As in \cite{anschuetz2022critical}, here we show that certain classes of variational loss functions of shallow quantum models converge in some limit to \emph{Wishart hypertoroidal random fields} (WHRFs), for which results on the loss landscape are already known~\cite{anschuetz2022critical}. We give a brief review here.

WHRFs in $q$ variables are random fields on a specific tensor product embedding of the hypertorus $\left(S^1\right)^{\times q}$ in $\mathbb{R}^{2^q}$. More specifically, points on this embedding are described by the Kronecker product:
\begin{equation}
    \bm{w}=\bigotimes\limits_{i=1}^q\begin{pmatrix}
    \cos\left(\theta_i\right)\\
    \sin\left(\theta_i\right)
    \end{pmatrix}
\end{equation}
for angles $-\cpi\leq\theta_i<\cpi$. These random fields are then of the form:
\begin{equation}
    F_{\text{WHRF}}\left(\bm{\theta}\right)=\bm{w}^\intercal\cdot\bm{J}\cdot\bm{w},
\end{equation}
where $\bm{J}$ is drawn from the normalized complex Wishart distribution $\mathcal{CW}_{2^q}\left(m,\bm{\varSigma}\right)$ with $m$ degrees of freedom. The complex Wishart distribution is a natural multivariate generalization of the gamma distribution, and is given by the distribution of the square of a complex Gaussian random matrix. Specifically, for $\bm{X}\in\mathbb{C}^{N\times m}$ a matrix with i.i.d. complex Gaussian columns with covariance matrix $\bm{\varSigma}$, the matrix
\begin{equation}
    \bm{W}=\frac{1}{m}\bm{X}\cdot\bm{X}^\dagger
\end{equation}
is normalized complex Wishart distributed with scale matrix $\bm{\varSigma}$ and $m$ degrees of freedom. As discussed in \cite{anschuetz2022critical}, the loss landscapes of WHRFs exhibit a complexity phase transition governed by the \emph{overparameterization ratio}
\begin{equation}
    \gamma=\frac{q}{2m},
\end{equation}
where models with $\gamma\geq 1$ have local minima near the global minimum, and models with $\gamma\ll 1$ have local minima far from the global minimum. Thus, the degrees of freedom parameter $m$ plays a pivotal role in governing the loss landscapes of WHRFs: when $q$ is much smaller than $m$, training is typically infeasible due to an abundance of ``traps'' in the training landscape. Our main result in Section~\ref{app:vqas_as_whrfs} is in demonstrating that even for certain shallow VQAs, the corresponding WHRF is such that $\gamma\ll 1$, and training is infeasible.

\subsection{Shallow VQAs Converge in Distribution to WHRFs}\label{app:vqas_as_whrfs}

As discussed informally in the main text, our goal is to demonstrate that certain distributions of shallow variational quantum algorithms (VQAs) weakly converge to Wishart hypertoroidal random fields (WHRFs). The distribution of local minima of WHRFs was shown in \cite{anschuetz2022critical} to exhibit a phase transition in trainability, where \emph{underparameterized models} are untrainable due to poor local minima, and \emph{overparameterized models} exhibit local minima close to the global minimum (though may still be untrainable for other reasons, e.g. due to barren plateaus~\cite{mcclean2018barren,cerezo2020costfunctiondependent,napp2022quantifying}).

Unlike the nonlocal ansatz case~\cite{anschuetz2022critical}, here we are unable to show the full convergence in distribution of shallow local VQAs to WHRFs. Instead, we focus on the joint distribution of the loss function, gradient norm, and Hessian determinant, where the gradient and Hessian have been normalized by the number of parameters $q$ in the reverse light cone of each term in the Pauli expansion of the problem Hamiltonian; by the parameter shift rule~\cite{schuld2019evaluating,PhysRevA.103.012405}, it is easy to see that this bounds the gradient norm and Hessian eigenvalues as $q$ is large. The local minima results of \cite{anschuetz2022critical} depend only on this joint distribution, and thus showing this convergence suffices for our purposes.

We now review the setup of the VQA loss functions we are considering. Throughout the course of this review, we will make various assumptions, particularly on the distribution of gates in the VQA ansatz and on the independence of various reverse light cones; we discuss these assumptions and whether or not they are reasonable in more detail at the end of this Section. As mentioned in the main text, we consider optimizing VQAs on the problem Hamiltonian $\bm{H}\neq\bm{0}$, which has Pauli decomposition:
\begin{equation}
    \bm{H}=\sum\limits_{i=1}^A\alpha_i\bm{P_i}.
\end{equation}
WLOG, we assume here $\bm{H}$ is traceless, and that all $\alpha_i>0$. To simplify our analysis, we will consider the case where the reverse light cone of each term $\alpha_i\bm{P_i}$ in the Pauli decomposition of $\bm{H}$ is i.i.d. drawn from the same distribution of ansatzes, with the same parameter dependence. To make this more concrete, assume that the reverse light cone of each $\alpha_i\bm{P_i}$ is of the form $\bm{V_i}\left(\bm{\theta}\right)\ket{\bm{0}}$ where $\bm{\theta}\in\mathbb{R}^q$, and has support on a number $l\ll n$ of qubits. In this regime, we can scale and shift the loss landscape of the standard variational loss function
\begin{equation}
    F\left(\bm{\theta}\right)=\bra{\bm{\theta}}\bm{H}\ket{\bm{\theta}}
\end{equation}
to be of the form:
\begin{equation}
    F_{\text{VQE}}\left(\bm{\theta}\right)=1-\lambda_0^{-1}\sum\limits_{i=1}^A\alpha_i\bra{\bm{0}}\bm{V_i}\left(\bm{\theta}\right)^\dagger\bm{P_i}\bm{V_i}\left(\bm{\theta}\right)\ket{\bm{0}}=1+\left\lVert\bm{\alpha}\right\rVert_1^{-1}\sum\limits_{i=1}^A\alpha_i\bra{\bm{0}}\bm{V_i}^\dagger\left(\bm{\theta}\right)\bm{P_i}\bm{V_i}\left(\bm{\theta}\right)\ket{\bm{0}},
    \label{eq:practical_loss_func}
\end{equation}
where $\lambda_0$ is the ground state energy of $\bm{H}$ and $\bm{\alpha}$ is the vector of all $\alpha_i$.

We assume that $\bm{V_i}\left(\bm{\theta}\right)$ is of the form:
\begin{equation}
    \bm{V_i}\left(\bm{\theta}\right)=\bm{W_i}\left(\bm{\theta}\right)\bm{U_i},
    \label{eq:v_def}
\end{equation}
where $\bm{U_i}$ are i.i.d. drawn from an $\epsilon$-approximate $t$-design under the monomial measure on $l$ qubits~\cite{harrow2018approximate}, where $\epsilon=\operatorname{O}\left(1\right)$. Note that in particular, though the total ansatz size $n$ may be large, all potential scrambling of the ansatz may only happen locally, in regions of size $l\ll n$; in other words, these ansatzes are \emph{not} expected to suffer from barren plateaus, particularly if $l=\operatorname{O}\left(\log\left(n\right)\right)$~\cite{cerezo2020costfunctiondependent,napp2022quantifying}. $\bm{W_i}$
is composed of fixed parameterized rotations which we take WLOG to be of the form $\bm{R}_{\bm{Y_a}}\left(\theta_b\right)=\exp\left(-\ci\theta_b\bm{Y_a}\right)$ (where as previously mentioned, this parameter dependence is identical across all $\bm{W_i}$), fixed gates, and potentially randomly chosen gates such that $\bm{W_i}$ itself is a random field. For simplicity, we also assume that all $\theta_i$ are independent from one another (i.e. we are in the $r=1$ regime of \cite{anschuetz2022critical}), and that each qubit in the reverse light cone has at least one parameterized gate. We also assume that the field $\bm{W_i}\left(\bm{\theta}\right)$ is rotationally invariant in $\theta_i$.

We now give the formal statement and proof of the loss landscapes of local, shallow VQAs. First, the formal statement:
\begin{theorem}[Approximately locally scrambled variational loss functions converge to WHRFs]
    Let $p_{\text{VQE},\bm{\theta}}$ be the joint distribution of the loss function of \eqref{eq:practical_loss_func}, its gradient norm, and the determinant of its Hessian at $\bm{\theta}$, where the gradient and Hessian are normalized by $q$. Let $p_{\text{WHRF},\bm{\theta}}$ be the same for the WHRF:
    \begin{equation}
        F_{\text{WHRF}}\left(\bm{\theta}\right)=m^{-1}\sum\limits_{i,j=1}^{2^l}w_i J_{i,j}w_j
    \end{equation}
    with $m=\frac{\left\lVert\bm{\alpha}\right\rVert_1^2}{\left\lVert\bm{\alpha}\right\rVert_2^2}2^{l-1}$ degrees of freedom, where $\bm{J}\sim\mathcal{CW}_{2^l}\left(m,\bm{I_{2^l}}\right)$. Here, $\bm{w}$ are points on the hypertorus $\left(S^1\right)^{\times l}$ parameterized by $\bm{\tilde{\theta}}$, where $\tilde{\theta_i}$ is the sum of all $\theta_j$ on qubit $i$. We then have that $p_{\text{VQE},\bm{\theta}}$ weakly converges to $p_{\text{WHRF},\bm{\theta}}$, up to an error $\operatorname{\tilde{O}}\left(\operatorname{poly}\left(\frac{1}{t}+\epsilon+\exp\left(-l\right)\right)\right)$ in L\'{e}vy--Prokhorov distance.
    \label{thm:app_scramb_vqe_convs_whrfs}
\end{theorem}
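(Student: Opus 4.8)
The plan is to reduce the theorem to a single statement about the law of a random Gram matrix, and then combine the $t$-design/Weingarten moment machinery with a quantitative multivariate central limit estimate. Crucially, the critical-point distribution of a WHRF used in \cite{anschuetz2022critical} is computed through a Kac--Rice formula, so only the \emph{one-point} joint law of (loss, gradient, Hessian) is needed; this is why the statement is phrased at a fixed $\bm\theta$ rather than as convergence of the whole field, which would be considerably harder.

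First I would put the loss into hypertoroidal form. Writing the Pauli spectral decomposition $\bm P_i=2\bm\Pi_i^{+}-\bm I$ with $\bm\Pi_i^{+}=\sum_{k=1}^{2^{l-1}}\ketbra{g_{ik}}{g_{ik}}$ the projector onto the $(+1)$-eigenspace, the additive constant cancels and $F_{\text{VQE}}(\bm\theta)=2\|\bm\alpha\|_1^{-1}\sum_{i=1}^{A}\alpha_i\sum_{k=1}^{2^{l-1}}\abs{\bra{g_{ik}}\bm W_i(\bm\theta)\bm U_i\ket{\bm 0}}^2$. In the $r=1$ regime, with $\bm W_i$ rotationally invariant in each $\theta_j$ and each light-cone qubit carrying a parameterized $Y$-rotation, the reduction carried out in \cite{anschuetz2022critical} lets me write $\bra{g_{ik}}\bm W_i(\bm\theta)\bm U_i\ket{\bm 0}=\bm w^{\intercal}\bm x_{ik}$, where $\bm w=\bigotimes_{j=1}^{l}(\cos\tilde\theta_j,\sin\tilde\theta_j)^{\intercal}$ is the hypertorus embedding and the complex vectors $\bm x_{ik}\in\mathbb{C}^{2^l}$ are independent of $\bm\theta$ but random through $\bm U_i$ and through the random gates of $\bm W_i$. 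Hence $F_{\text{VQE}}(\bm\theta)=2\|\bm\alpha\|_1^{-1}\,\bm w^{\intercal}\bm B\,\bm w$ with $\bm B=\sum_{i,k}\alpha_i\,\bm x_{ik}\bm x_{ik}^{\dagger}$, and the $q$-normalized gradient and Hessian of $F_{\text{VQE}}$ in $\bm\theta$ become fixed trigonometric-polynomial functions of $\bm B$ and $\bm\theta$ (the $q$-normalization absorbs the chain-rule Jacobian from $\bm\theta$ to the $l$ combined angles $\tilde{\bm\theta}$ and keeps these derivatives bounded via the parameter-shift rule). The whole claim then reduces to showing that the law of $\bm B$ is close to that of $\tfrac{\|\bm\alpha\|_1}{2m}\bm J$ with $\bm J\sim\mathcal{CW}_{2^l}(m,\bm I)$, in a sense robust under a fixed smooth pushforward.

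Next I would establish the matching of $\bm B$ to a Wishart matrix. Averaging any degree-$\le t$ polynomial in the entries of $\bm U_i\ketbra{\bm 0}{\bm 0}\bm U_i^{\dagger}$ against an $\epsilon$-approximate $t$-design agrees with the Haar average up to $\epsilon$, and the relevant Weingarten contractions of $\bm x_{ik}\bm x_{ik}^{\dagger}$ and its products match those of a standard complex Gaussian vector up to $\operatorname{O}(2^{-l})$ dimensional corrections; consequently $\{\bm x_{ik}\}$ agrees with i.i.d.\ complex Gaussians in all mixed moments of order $\lesssim t$ up to an additive $\operatorname{\tilde{O}}(\epsilon+2^{-l})$, the one subtlety being that the $2^{l-1}$ vectors sharing a fixed $\bm U_i$ are merely exchangeable, so the comparison must be done group by group. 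Then $\bm B$ is a weighted sum of $A\cdot 2^{l-1}$ near-Gaussian rank-one Hermitian matrices, and a quantitative multivariate CLT/Berry--Esseen bound on $2^l\times 2^l$ Hermitian matrices (equivalently, a method-of-moments comparison truncated near order $t$) shows $\bm B$ converges to the complex Wishart law whose effective degrees of freedom equal the participation ratio $m=\frac{\|\bm\alpha\|_1^2}{\|\bm\alpha\|_2^2}2^{l-1}$ (equal weights recovering $m=A\,2^{l-1}$, as expected from $\sum_{i,k}\alpha_i$ versus $\sum_{i,k}\alpha_i^2$). The hypothesis $q\log q=\operatorname{o}(m)$ is what drives the CLT/concentration error to zero: it ensures the number of summands dominates the effective number of directions probed by the smooth map and its $q$ parameter derivatives, so that the $\operatorname{O}(m^{-1/2})$ rate together with the $1/t$, $\epsilon$ and $2^{-l}$ moment errors produces the advertised $\operatorname{\tilde{O}}(\operatorname{poly}(1/t+\epsilon+\exp(-l)))$ bound.

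Finally I would transport the resulting distance between the law of $\bm B$ and the normalized Wishart law through the fixed smooth map $(\bm B,\bm\theta)\mapsto(F,\|\nabla F\|,\det\nabla^2 F)$: on the overwhelmingly-likely event that $\bm B$ has bounded operator norm this map is Lipschitz with fixed constants, so a Wasserstein/smoothed-test-function comparison survives with only polynomial loss and yields the claimed L\'{e}vy--Prokhorov bound. I expect the genuine obstacle to be this quantitative CLT-to-Wishart step with \emph{grouped}, non-Gaussian, heterogeneously weighted summands, controlled in a metric strong enough to pass through the nonlinear (Hessian-determinant) pushforward: matching the low-order moments of $\bm B$ to Wishart is routine Weingarten calculus, but upgrading a $t$-truncated moment comparison to an honest distributional estimate with the stated error, uniformly over the $q$ parameter directions, is the delicate part.
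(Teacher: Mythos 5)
Your skeleton---reduce to a quadratic form in the hypertorus embedding, match moments of the approximate design against Haar/Gaussian via Weingarten calculus, and push the comparison through the map to $(F,\lVert\nabla F\rVert,\det\nabla^2 F)$---is in the same spirit as the paper, but two of your load-bearing steps have genuine gaps. First, the identity $\bra{g_{ik}}\bm{W_i}(\bm{\theta})\bm{U_i}\ket{\bm 0}=\bm{w}^{\intercal}\bm{x_{ik}}$ with $\bm{x_{ik}}$ independent of $\bm{\theta}$ is not true instance-wise for the ansatzes covered by the theorem: $\bm{W_i}$ contains fixed and random gates interleaved \emph{between} the parameterized rotations, so a single realization of the loss does not depend on the angles only through the per-qubit sums $\tilde\theta_i$, and does not have the tensor-product $\cos/\sin$ structure you assume. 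This reduction holds only \emph{in distribution}, and establishing it is exactly the content of the paper's two lemmas: an $\epsilon$-approximate-$t$-design-to-Haar replacement whose truncated moment agreement is upgraded to a L\'{e}vy--Prokhorov bound via a characteristic-function estimate (Zaitsev), and a ``canonical form'' lemma that uses asymptotic free independence of Haar unitaries from constant matrices to show the interleaved gates only contribute $\operatorname{\tilde{O}}(\operatorname{poly}\exp(-l))$ corrections. Your proposal cites the earlier paper's reduction, but that reduction was derived for a different (globally scrambled, layered) ansatz model; here it has to be re-proved, and you do not supply the argument.

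Second, your central reduction---``show the law of $\bm{B}$ is close to that of $\frac{\lVert\bm{\alpha}\rVert_1}{2m}\bm{J}$ in a sense robust under a fixed smooth pushforward''---is false at the matrix level when the weights $\alpha_i$ are heterogeneous and $A$ is small: $\bm{B}$ is then a weighted sum of independent Wisharts whose spectral law is the free convolution of differently scaled Marchenko--Pastur laws, which differs at order one from the law of a single Wishart with $m=\frac{\lVert\bm{\alpha}\rVert_1^2}{\lVert\bm{\alpha}\rVert_2^2}2^{l-1}$ degrees of freedom; no Wasserstein/Lipschitz-pushforward argument from the full matrix law can recover the stated bound. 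What is actually true---and what the paper uses---is that only the \emph{joint law of the loss, gradient norm, and Hessian determinant at a point} of the weighted sum of WHRFs matches that of the single WHRF with participation-ratio degrees of freedom; this is exactly the ``trivial generalization of Theorem 5'' of the earlier work invoked in the paper's proof, after Jiang's theorem converts Haar columns to Gaussians. You correctly flag this heterogeneously weighted, grouped CLT-to-Wishart step as ``the delicate part,'' but your plan routes it through a matrix-level comparison that cannot work, so the gap is not merely quantitative. (A smaller point: the hypothesis $q\log q=\operatorname{o}(m)$ does not drive any CLT error in this theorem---it is used downstream, to argue that normalizing derivatives by $q$ rather than $l$ does not change the logarithmic asymptotics of the Kac--Rice local-minima count.)
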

As we previously mentioned, for technical reasons, we only prove the convergence of the joint distribution of the loss and certain functions of its first two derivatives. We emphasize once more that this does not affect our final conclusions, as all results on the local minima distribution of WHRFs given in \cite{anschuetz2022critical} depend only on this joint distribution.

To prove Theorem~\ref{thm:app_scramb_vqe_convs_whrfs}, we begin by showing that, up to terms that go to zero polynomially quickly as $\epsilon\to 0,t\to\infty$, one can WLOG consider ansatzes of the form of \eqref{eq:practical_loss_func} that are explicitly Haar random within each reverse light cone of size $l$.
\begin{lemma}[Approximate local scrambling bound on the loss function and its derivatives]
    Let $p_{\text{VQE},\bm{\theta}}$ be the joint distribution described in Theorem~\ref{thm:app_scramb_vqe_convs_whrfs}. Let $p_{\text{Haar},\bm{\theta}}$ be the same, for $\bm{U_i}$ taken to be i.i.d. Haar random. We then have that $p_{\text{VQE},\bm{\theta}}$ weakly converges to $p_{\text{Haar},\bm{\theta}}$, up to an error $\operatorname{\tilde{O}}\left(\operatorname{poly}\left(\frac{1}{t}+\epsilon\right)\right)$ in L\'{e}vy--Prokhorov distance.
    \label{lemma:app_scramb_bound}
\end{lemma}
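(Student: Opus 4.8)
The plan is to fix the ``non-design'' randomness and reduce the claim to a moment comparison between the $\epsilon$-approximate $t$-design ensemble and the Haar ensemble on the individual local blocks $\bm{U_i}$. Recall that $F_{\text{VQE}}$ is assembled from the blocks $\bm{V_i}\left(\bm{\theta}\right)=\bm{W_i}\left(\bm{\theta}\right)\bm{U_i}$, and that the only difference between $p_{\text{VQE},\bm{\theta}}$ and $p_{\text{Haar},\bm{\theta}}$ is the law of the independent unitaries $\bm{U_1},\dots,\bm{U_A}$. Conditioning on all random gates inside the fields $\bm{W_i}$, the triple consisting of the loss, the $q$-normalized gradient norm, and the $q$-normalized Hessian determinant (restricted to the effective directions $\bm{\tilde\theta}$) at the fixed point $\bm{\theta}$ becomes a fixed function $\Phi\left(\bm{U_1},\dots,\bm{U_A}\right)$ valued in a bounded subset of $\mathbb{R}^3$. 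Since the bounded-Lipschitz metric $d_{\mathrm{BL}}$ metrizes weak convergence on $\mathbb{R}^3$ and is polynomially comparable to the L\'{e}vy--Prokhorov distance $d_{\mathrm{LP}}$, and since a uniform conditional bound on $d_{\mathrm{BL}}$ survives averaging over the conditioned randomness by Jensen, it suffices to bound $\left\lvert\mathbb{E}_{\text{design}}\left[g\circ\Phi\right]-\mathbb{E}_{\text{Haar}}\left[g\circ\Phi\right]\right\rvert$ uniformly over $1$-Lipschitz $g:\mathbb{R}^3\to\left[0,1\right]$.

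Two ingredients feed this comparison. First, the parameter-shift rule shows that each first and second derivative of $F_{\text{VQE}}$ is a fixed finite linear combination of evaluations of $F_{\text{VQE}}$ at shifted parameters; hence the $q$-normalized gradient norm is bounded and of constant degree in each $\bm{U_i}$, while --- using that every $f_i$ factors through the $l$ effective parameters in its light cone, so the relevant Hessian is effectively $l\times l$ --- the normalized Hessian determinant is bounded and of degree $\operatorname{O}\left(l\right)$ in each $\bm{U_i}$. Since for the loss these polynomials are automatically balanced (equal degree in $\bm{U_i}$ and $\bar{\bm{U_i}}$), each coordinate of $\Phi$ is a bounded, balanced polynomial of degree $\operatorname{O}\left(l\right)$ in each block. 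Second, on the compact range of $\Phi$ one can approximate $g$ uniformly to accuracy $\delta$ by a polynomial $\tilde g$ of degree $\operatorname{poly}\left(1/\delta\right)$, so that $\tilde g\circ\Phi$ is a polynomial in the $\bm{U_i},\bar{\bm{U_i}}$ of balanced degree $\operatorname{poly}\left(1/\delta\right)\cdot\operatorname{O}\left(l\right)$ in each $\bm{U_i}$.

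Given these, the comparison is a telescoping argument over the $A$ blocks: replace the law of $\bm{U_1},\dots,\bm{U_A}$ by the Haar law one index at a time, using independence so that each step is a single balanced monomial functional in one $\bm{U_i}$. As long as $\operatorname{poly}\left(1/\delta\right)\cdot\operatorname{O}\left(l\right)\leq t$, the $\epsilon$-approximate $t$-design property bounds each step by $\epsilon$ times the norm of the corresponding marginal functional; the essential bookkeeping is that the Hamiltonian terms enter $F_{\text{VQE}}$ through the normalized weights $\alpha_i/\left\lVert\bm{\alpha}\right\rVert_1$, which sum to one, so the error accumulated across all $A$ swaps stays $\operatorname{O}\left(\epsilon\right)$ rather than growing with $A$. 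This yields $\left\lvert\mathbb{E}_{\text{design}}\left[g\circ\Phi\right]-\mathbb{E}_{\text{Haar}}\left[g\circ\Phi\right]\right\rvert\leq 2\delta+\operatorname{poly}\left(\epsilon\right)$, and optimizing $\delta=\operatorname{poly}\left(l/t\right)$ subject to the degree constraint produces the claimed $\operatorname{\tilde{O}}\left(\operatorname{poly}\left(\tfrac{1}{t}+\epsilon\right)\right)$ bound in L\'{e}vy--Prokhorov distance.

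The main obstacle is controlling the degree and functional norm of the Hessian-determinant coordinate of $\Phi$: one must use the factorization through $\bm{\tilde\theta}$, together with the block-independence assumptions on the light cones, to see that the Hessian is genuinely $l\times l$ on the effective directions (so its determinant has degree $\operatorname{O}\left(l\right)$, not $\operatorname{O}\left(q\right)$, in each $\bm{U_i}$), and one must bound the tensor norm of $\tilde g\circ\Phi$ viewed as a polynomial in the $\bm{U_i}$ --- which does not follow automatically from $\tilde g\circ\Phi$ being bounded. The remaining steps --- the parameter-shift identities, the Jackson-type polynomial-approximation rate, the hybrid swap, and the passage from $d_{\mathrm{BL}}$ back to $d_{\mathrm{LP}}$ and over the conditioned randomness --- are routine.
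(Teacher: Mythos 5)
Your first half runs on the same engine as the paper's proof: the only difference between the two ensembles is the law of the independent blocks $\bm{U_i}$, and the monomial-measure $\epsilon$-approximate $t$-design property, combined with the fact that the loss, the $q$-normalized gradient norm, and the normalized Hessian determinant are $O(1)$ quantities built from the weights $\alpha_i/\left\lVert\bm{\alpha}\right\rVert_1$, is what lets one compare expectations of low-degree polynomial statistics. Where you diverge is the conversion from polynomial/moment closeness to closeness in L\'{e}vy--Prokhorov distance: the paper compares the joint characteristic functions (and their partial derivatives) of the two distributions on a box $\left\lVert\bm{x}\right\rVert_\infty\leq T$, obtaining an error of order $\epsilon\operatorname{poly}\left(T\right)+\left(3T\right)^t/t!$ by Taylor-expanding and using the moment bounds, and then invokes Zaitsev's quantitative theorem (his Theorem 4, with $m=\log T$ and $T=\operatorname{poly}\left(\min\left(t,1/\epsilon\right)\right)$) to pass directly to the L\'{e}vy--Prokhorov metric. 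You instead propose Lipschitz test functions, a Jackson-type polynomial approximation, a block-by-block hybrid swap, and a bounded-Lipschitz-to-L\'{e}vy--Prokhorov comparison. That route is not unreasonable, but it is precisely in this second half that your argument has a genuine gap, and it is the one you yourself flag.

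Concretely: the monomial-measure design guarantee controls the expectation of individual balanced monomials in the entries of a single $\bm{U_i}$, so bounding $\left\lvert\mathbb{E}_{\text{design}}\left[\tilde g\circ\Phi\right]-\mathbb{E}_{\text{Haar}}\left[\tilde g\circ\Phi\right]\right\rvert$ at each hybrid step requires control of the coefficient ($\ell_1$/tensor) norm of $\tilde g\circ\Phi$ in the monomial basis of that block --- and, as you concede, boundedness of $\tilde g\circ\Phi$ does not provide this. Since $\tilde g$ has degree $\operatorname{poly}\left(1/\delta\right)$ and the Hessian-determinant coordinate already has per-block degree $\operatorname{O}\left(l\right)$, the composed polynomial has enormous monomial expansions whose coefficient norms you never bound, so the per-step error is not established; likewise, the claim that the $A$ swap errors accumulate to $\operatorname{O}\left(\epsilon\right)$ rather than $\operatorname{O}\left(A\epsilon\right)$ is asserted via the normalization $\sum_i\alpha_i/\left\lVert\bm{\alpha}\right\rVert_1=1$ but never carried out through the composition with $\tilde g$, where cross-block monomials with multinomial coefficients appear. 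The paper avoids exactly this by never composing with an auxiliary polynomial: it compares raw mixed moments of the three bounded, normalized statistics, for which the weighted-average structure makes a degree-$s$ moment an $O(1)$-coefficient combination of per-block design expectations (so the error is governed by $s\leq t$, not by $A$), and it delegates the metric conversion to an off-the-shelf characteristic-function bound rather than a hand-rolled Lipschitz/hybrid argument. To repair your proof you would need either an explicit bound on the monomial-coefficient norms of $\tilde g\circ\Phi$ (with the attendant dimension factors from the $2^l$-dimensional blocks), or to retreat to the paper's moment/characteristic-function route.
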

\begin{proof}
    Let $\phi_{\text{VQE}}\left(\bm{x}\mid\bm{\theta}\right)$ be the joint characteristic function of $p_{\text{VQE},\bm{\theta}}$, and similarly $\phi_{\text{Haar}}\left(\bm{x}\mid\bm{\theta}\right)$. Since $\bm{U_i}$ are assumed to be i.i.d. $\epsilon$-approximate $t$-designs under the monomial measure, for any moments $M_{\text{VQE},\bm{\theta}},M_{\text{Haar},\bm{\theta}}$ of degree $s$ of $p_{\text{VQE},\bm{\theta}},p_{\text{Haar},\bm{\theta}}$, respectively, we have that:
    \begin{equation}
        \left\lvert M_{\text{VQE},\bm{\theta}}-M_{\text{Haar},\bm{\theta}}\right\rvert=\operatorname{O}\left(\epsilon\bm{1}\left[s\leq t\right]+\bm{1}\left[s>t\right]\right).
    \end{equation}
    In particular, for all $T$ sublinear in $t$,
    \begin{equation}
        \left\lvert\phi_{\text{VQE}}\left(\bm{x}\mid\bm{\theta}\right)-\phi_{\text{Haar}}\left(\bm{x}\mid\bm{\theta}\right)\right\rvert=\operatorname{O}\left(\epsilon\operatorname{poly}\left(T\right)+\frac{\left(3T\right)^t}{t!}\right)
    \end{equation}
    for all $\bm{x}$ with $\left\lVert\bm{x}\right\rVert_\infty\leq T$. Similar inequalities hold for the partial derivatives of the joint characteristic functions. Therefore, there exists some $T=\operatorname{\Omega}\left(\operatorname{poly}\left(\min\left(t,\frac{1}{\epsilon}\right)\right)\right)$ such that the second bound of Theorem 4 of~\cite{zaitsev1984} (with $m=\log\left(T\right)$) on the L\'{e}vy--Prokhorov distance is $\operatorname{\tilde{O}}\left(\operatorname{poly}\left(\frac{1}{t}+\epsilon\right)\right)$.
\end{proof}

Until now, we have considered ansatzes with generic parameter dependence. We now show that up to terms vanishing exponentially quickly in the reverse light cone size $l$, we can consider a canonical ansatz form WLOG.
\begin{lemma}[Canonical form for Hamiltonian agnostic variational loss functions]
    Let $p_{\text{Haar},\bm{\theta}}$ be the joint distribution described in Lemma~\ref{lemma:app_scramb_bound}. Let $p_{\text{can},\bm{\theta}}$ be the same for the variational loss function
    \begin{equation}
        F_{\text{can}}\left(\bm{\theta}\right)=\left\lVert\bm{\alpha}\right\rVert_1^{-1}\sum\limits_{i=1}^A\alpha_i\bra{\bm{0}}\bm{R}\left(\bm{\theta}\right)^\dagger\bm{U_i}^\dagger\bm{P_i}\bm{U_i}\bm{R}\left(\bm{\theta}\right)\ket{\bm{0}}+1,
    \end{equation}
    where $\bm{R}\left(\bm{\theta}\right)$ is the product of the parameterized rotations of \eqref{eq:v_def}. We then have that $p_{\text{Haar},\bm{\theta}}$ weakly converges to $p_{\text{can},\bm{\theta}}$, up to an error $\operatorname{\tilde{O}}\left(\operatorname{poly}\exp\left(-l\right)\right)$ in L\'{e}vy--Prokhorov distance.
    \label{lemma:can_form}
\end{lemma}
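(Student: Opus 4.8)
The plan is to strip the fixed and random gates out of each $\bm{W_i}$ and collapse its parameterized rotations, controlling the error at every step by the concentration of Haar moments on the $2^l$-dimensional local Hilbert space, and to conclude exactly as in Lemma~\ref{lemma:app_scramb_bound} via Zaitsev's smoothing inequality \cite{zaitsev1984}. First I would work with the joint characteristic function of the triple (loss, $q$-normalized gradient norm, $q$-normalized Hessian determinant) at $\bm{\theta}$. By the parameter-shift rule \cite{schuld2019evaluating,PhysRevA.103.012405}, each of these is a bounded polynomial in expectations $\bra{\bm{0}}\bm{U_i}^\dagger\bm{X_i}\left(\bm{\theta}\right)\bm{U_i}\ket{\bm{0}}$, where $\bm{X_i}$ is a product of $\bm{P_i}$, the single-qubit rotation generators $\bm{Y_a}$, and the gates of $\bm{W_i}$; the $1/q$ normalization keeps everything $\operatorname{O}\left(1\right)$ as in the discussion preceding Theorem~\ref{thm:app_scramb_vqe_convs_whrfs}. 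The randomness inside $\bm{W_i}$ is handled by a combination of conditioning (Fubini, with uniform bounds) and the assumed rotational invariance of the field $\bm{W_i}\left(\bm{\theta}\right)$ in each $\theta_i$, which together let one reduce, up to $\operatorname{\tilde{O}}\left(\operatorname{poly}\exp\left(-l\right)\right)$, to the case where $\bm{W_i}$ is a deterministic circuit of fixed gates interleaved with parameterized $R_Y$ rotations.

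Second, with $\bm{W_i}$ deterministic, I would use the left-invariance of the Haar measure for $\bm{U_i}$ to absorb into $\bm{U_i}$ every fixed gate of $\bm{W_i}$ below its first parameterized rotation, and more generally to replace $\bm{V_i}\left(\bm{\theta}\right)=\bm{W_i}\left(\bm{\theta}\right)\bm{U_i}$ by a Haar-distributed $\bm{\tilde{U}_i}$ while tracking the $\bm{\theta}$-dependence through the deterministic factor $\bm{G_i}\left(\bm{\epsilon}\right)=\bm{W_i}\left(\bm{\theta}+\bm{\epsilon}\right)\bm{W_i}\left(\bm{\theta}\right)^{-1}$. Because $r=1$, the derivatives of $\bm{G_i}$ at $\bm{\epsilon}=\bm{0}$ are polynomials in the \emph{effective generators} $\bm{K_{i,j}}=\bm{W_{i,>j}}\left(\bm{\theta}\right)\bm{Y_{a_j}}\bm{W_{i,>j}}\left(\bm{\theta}\right)^\dagger$, i.e. the single-qubit Pauli generators conjugated by the part of $\bm{W_i}$ above rotation $j$. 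Carrying out the identical computation for $F_{\text{can}}$, and using that the $R_Y$ rotations all commute with one another, produces instead the \emph{undressed} generators $\bm{Y_k}$ together with the reparametrization $\tilde{\theta}_k=\sum_{j\text{ on qubit }k}\theta_j$ --- precisely the hypertoroidal angle collapse appearing in Theorem~\ref{thm:app_scramb_vqe_convs_whrfs}.

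Third, I would compute all joint moments via Weingarten calculus, i.e. moments of a Haar state on the $2^l$-dimensional local Hilbert space, of which the first two cases are \eqref{eq:haar_moments} and the higher ones are entirely analogous. Every such moment is a polynomial in the normalized Hilbert--Schmidt overlaps $2^{-l}\operatorname{Tr}\left(\bm{M_1}\cdots\bm{M_s}\right)$ of products drawn from $\left\{\bm{P_i},\bm{K_{i,j}}\right\}$, and the task is to show each of these agrees with its $F_{\text{can}}$ counterpart (products drawn from $\left\{\bm{P_i},\bm{Y_k}\right\}$) up to $\operatorname{O}\left(\exp\left(-l\right)\right)$. The ``diagonal'' contributions that survive in $F_{\text{can}}$ are exactly those fixed by $\bm{P_i}$ and $\bm{K_{i,j}}$ being traceless Hermitian involutions with $2^{-l}\operatorname{Tr}\left(\bm{P_i}^2\right)=2^{-l}\operatorname{Tr}\left(\bm{K_{i,j}}^2\right)=1$, while every other overlap --- distinct Pauli types in the decomposition of $\bm{H}$, and generator--observable cross terms --- must be shown to be $\operatorname{O}\left(\exp\left(-l\right)\right)$. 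Matching moments then feeds into Theorem~4 of \cite{zaitsev1984} exactly as in Lemma~\ref{lemma:app_scramb_bound}, yielding the claimed Lévy--Prokhorov bound $\operatorname{\tilde{O}}\left(\operatorname{poly}\exp\left(-l\right)\right)$.

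\textbf{Main obstacle.} The crux is that last sub-step: ruling out an anomalous $\operatorname{O}\left(1\right)$ overlap between a \emph{dressed} generator $\bm{K_{i,j}}$ and the observable $\bm{P_i}$ (or another dressed generator). The dressing is conjugation of a single-qubit $Y$ by the interleaved gates of $\bm{W_i}$, which --- since the circuit is shallow --- only spreads the generator over $\operatorname{O}\left(l\right)$ qubits; but a priori those gates could rotate the generator's axis partly onto $\bm{P_i}$, creating a circuit-dependent invariant with no analogue in $F_{\text{can}}$ (where the generator--observable overlap is genuinely random and concentrates at $0$), and this would alter, e.g., the variance of the gradient. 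Excluding this is exactly where the local-approximate-scrambling and rotational-invariance hypotheses --- the same ones already invoked in Lemma~\ref{lemma:app_scramb_bound} at the level of $\bm{U_i}$ --- have to be used; making the suppression quantitative, by tracking how the shallow fixed-gate layers act on the Pauli decomposition over the $l$ light-cone qubits and bounding every non-canonical component by $\exp\left(-l\right)$, is the technical heart of the lemma and is what caps the error at $\operatorname{poly}\exp\left(-l\right)$.
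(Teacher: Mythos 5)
Your plan stalls exactly where you say it does, and the missing step is not just "technical": as formulated it is the wrong target. You propose to match, via Weingarten calculus, every normalized overlap $2^{-l}\operatorname{Tr}\left(\bm{M_1}\cdots\bm{M_s}\right)$ built from the dressed generators $\bm{K_{i,j}}=\bm{W_{i,>j}}\bm{Y_{a_j}}\bm{W_{i,>j}}^\dagger$ and $\bm{P_i}$ against its undressed counterpart up to $\operatorname{O}\left(\exp\left(-l\right)\right)$. But nothing in the hypotheses forces this: for fixed interleaved gates the quantity $2^{-l}\operatorname{Tr}\left(\bm{K_{i,j}}\bm{P_i}\bm{K_{i,j}}\bm{P_i}\right)$ can be $\pm 1$ (e.g. when the dressed generator commutes with $\bm{P_i}$, killing that gradient component identically), so the per-component second moments of $p_{\text{Haar},\bm{\theta}}$ and $p_{\text{can},\bm{\theta}}$ genuinely differ at relative order one, and no appeal to local scrambling or rotational invariance will make those traces exponentially small. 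The lemma survives for a different reason: after the shift and normalization, the loss, gradient and Hessian entries all fluctuate only at scale $\operatorname{poly}\exp\left(-l/2\right)$, so order-one mismatches in the fine correlation structure of individual components only displace the joint law by $\operatorname{\tilde{O}}\left(\operatorname{poly}\exp\left(-l\right)\right)$ in L\'{e}vy--Prokhorov distance. An argument that tries to prove overlap-by-overlap agreement, rather than exploiting this scale separation, cannot be completed.

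The paper's proof is organized around precisely that scale separation and never touches dressed generators. It considers mixed moments of the differences $K_{ij}\left(\bm{\theta_j}\right)=\bra{\bm{0}}\bm{U_i}^\dagger\bm{W_i}^\dagger\bm{P_i}\bm{W_i}\bm{U_i}\ket{\bm{0}}-\bra{\bm{0}}\bm{\tilde{U}_{ij}}^\dagger\bm{U_i}^\dagger\bm{W_i}^\dagger\bm{P_i}\bm{W_i}\bm{U_i}\bm{\tilde{U}_{ij}}\ket{\bm{0}}$, i.e. it compares the shared-unitary evaluations entering the loss, gradient and Hessian with re-randomized ones. Writing $K_{ij}$ as a trace of the traceless observable $\bm{W_i}^\dagger\bm{P_i}\bm{W_i}$ against the traceless difference of projectors $\ket{\bm{0}}\bra{\bm{0}}-\bm{\tilde{U}_{ij}}\ket{\bm{0}}\bra{\bm{0}}\bm{\tilde{U}_{ij}}^\dagger$ conjugated by $\bm{U_i}$, asymptotic freeness of Haar unitaries from constant matrices gives that all such moments are $\operatorname{O}\left(\operatorname{poly}\exp\left(-l\right)\right)$; the characteristic-function/Zaitsev machinery of Lemma~\ref{lemma:app_scramb_bound} is then rerun with $\epsilon=\operatorname{O}\left(\operatorname{poly}\exp\left(-l\right)\right)$, and once each component effectively carries its own independent Haar unitary, unitary invariance of the Haar measure erases all dependence on the fixed and random gates of $\bm{W_i}$, yielding the canonical form directly. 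Your first two steps (conditioning away the randomness in $\bm{W_i}$, absorbing gates below the first rotation, reparameterizing to $\tilde{\theta}_k$) are compatible with this, but to salvage your route you would have to replace the overlap-matching step by a decoupling or concentration argument of this kind; as written, the "main obstacle" you flag is a genuine gap, and the estimate you hope to prove there is false in general.
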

\begin{proof}
    Let us consider (generally mixed) moments involving random variables of the form:
    \begin{equation}
        K_{ij}\left(\bm{\theta_j}\right)=\bra{\bm{0}}\bm{U_i}^\dagger\bm{W_i}\left(\bm{\theta_j}\right)^\dagger\bm{P_i}\bm{W_i}\left(\bm{\theta_j}\right)\bm{U_i}\ket{\bm{0}}-\bra{\bm{0}}\bm{\tilde{U}_{ij}}^\dagger\bm{U_i}^\dagger\bm{W_i}\left(\bm{\theta_j}\right)^\dagger\bm{P_i}\bm{W_i}\left(\bm{\theta_j}\right)\bm{U_i}\bm{\tilde{U}_{ij}}\ket{\bm{0}},
    \end{equation}
    where $\bm{U_i},\bm{\tilde{U}_{ij}}$ are i.i.d. Haar random on $l$ qubits. By the asymptotic free independence of Haar random matrices from constant matrices, and the fact that
    \begin{equation}
        \tr\left(\bm{W_i}\left(\bm{\theta_j}\right)\bm{P_i}\bm{W_i}\left(\bm{\theta_j}\right)^\dagger\right)=\tr\left(\ket{\bm{0}}\bra{\bm{0}}-\bm{\tilde{U}_{ij}}\ket{\bm{0}}\bra{\bm{0}}\bm{\tilde{U}_{ij}}^\dagger\right)=0,
    \end{equation}
    we have that any such moment is on the order of $\operatorname{O}\left(\operatorname{poly}\exp\left(-l\right)\right)$~\cite{voiculescu1991limit}. In particular, it is easy to see that up to an error in L\'{e}vy--Prokhorov distance on this order, one can WLOG take $p_{\text{can},\bm{\theta}}$ as if the gradient and Hessian components had i.i.d. $\bm{U_{ij}}$ rather than $\bm{U_i}$---for instance, this follows identically to the proof of Lemma~\ref{lemma:app_scramb_bound} with $\epsilon=\operatorname{O}\left(\operatorname{poly}\exp\left(-l\right)\right)$. The result then follows from the unitary invariance of the Haar measure.
\end{proof}

We are now able to prove Theorem~\ref{thm:app_scramb_vqe_convs_whrfs}, following essentially the same procedure as proving Theorem 5 of \cite{anschuetz2022critical}.
\begin{proof}
    By Lemmas~\ref{lemma:app_scramb_bound} and~\ref{lemma:can_form}, $p_{\text{VQE},\bm{\theta}}$ weakly converges to $p_{\text{Haar},\bm{\theta}}$ up to an error $\operatorname{\tilde{O}}\left(\operatorname{poly}\left(\frac{1}{t}+\epsilon+\exp\left(-l\right)\right)\right)$ in L\'{e}vy--Prokhorov distance. By Corollary 1 of \cite{jiang2005}, this then proves weak convergence of $p_{\text{VQE},\bm{\theta}}$ to the corresponding joint distribution of a weighted sum of WHRFs each with $2^{l-1}$ degrees of freedom, up to an additional error in L\'{e}vy--Prokhorov distance exponentially small in $l$. Weak convergence to $p_{\text{WHRF},\bm{\theta}}$ then follows from a trivial generalization of Theorem 5 of \cite{anschuetz2022critical}.
\end{proof}

\paragraph{Scope of results} We now comment on the applicability of the results of \cite{anschuetz2022critical} on the local minima distribution of WHRFs when Theorem~\ref{thm:app_scramb_vqe_convs_whrfs} holds. All analysis of the local minima distribution of WHRFs in \cite{anschuetz2022critical} depends only on the joint distribution $p_{\text{WHRF},\bm{\theta}}$, up to a change in normalization of the gradient and Hessian by $l$ rather than $q$ that does not contribute to the logarithmic asymptotics (i.e. Theorem 7 of \cite{anschuetz2022critical}) when $q\log\left(q\right)=\operatorname{o}\left(m\right)$. Thus, in the discussion of the main text, we take this as an extra assumption. Furthermore, we note that the analysis in the main text holds only up to shifts on the order of $\operatorname{\tilde{O}}\left(\operatorname{poly}\left(\frac{1}{t}+\epsilon+\exp\left(-l\right)\right)\right)$ in the joint distribution $p_{\text{WHRF},\bm{\theta}}$, due to the rate of convergence of Theorem~\ref{thm:app_scramb_vqe_convs_whrfs}. However, shifts on this order do not affect the conclusions of \cite{anschuetz2022critical} for sufficiently large constant $\epsilon^{-1},t$. \change{For completeness, we summarize this discussion and known results on the loss landscapes of WHRFs~\cite{anschuetz2022critical} with the following Corollary:
\begin{corollary}[Shallow, local VQAs have poor loss landscapes]
    Let $F_{\text{VQE}}$ be a local VQA loss function of the form of \eqref{eq:practical_loss_func}. Assume all coefficients $\alpha_i$ of the Pauli decomposition of $\bm{H}$ are $\operatorname{\Theta}\left(1\right)$, and
    \begin{equation}
        l\log\left(n\right)+q\log\left(q\right)=\operatorname{o}\left(2^l A\right).\label{eq:cor_cond}
    \end{equation}
    Then $p_{\text{VQE},\bm{\theta}}$ weakly converges to $p_{\text{WHRF},\bm{\theta}}$ as in Theorem~\ref{thm:app_scramb_vqe_convs_whrfs}, where the associated WHRF has a fraction superpolynomially small in $n$ of local minima within any constant additive error of the ground state energy.
\end{corollary}
\begin{proof}
    The result follows immediately by applying Theorem 7 of \cite{anschuetz2022critical} to Theorem~\ref{thm:app_scramb_vqe_convs_whrfs}.
\end{proof}
}

\paragraph{Assumptions} Let us now discuss in more detail the assumptions made in the course of proving Theorem~\ref{thm:app_scramb_vqe_convs_whrfs}. First, we assume that at least some part of the ansatz circuit scrambles some local region around any measured observable; that is, we assume that the ansatz locally is an $\epsilon$-approximate $t$ design for sufficiently large $\epsilon^{-1},t$. It is known that shallow, local circuits dimensions exhibit this property, when $2$-local Haar random gates are applied~\cite{harrow2018approximate}; thus, in a practical sense, our results assume that the local gates in any distribution of ansatzes under consideration are approximately Haar random. This is a typical model of \emph{Hamiltonian agnostic ansatzes}, where the ansatz is chosen independently from the problem Hamiltonian $\bm{H}$; see for instance the discussion in the main text and the references therein. The inapplicability of this assumption to \emph{Hamiltonian informed ansatzes}---particularly for highly symmetric problems---is discussed in more detail in \change{the Discussion of the main text}, where we review models that may not suffer from the poor trainability properties we show here.

Our other major assumption is the independence of the $\bm{V_i}\left(\bm{\theta}\right)$ (up to the repeated use of parameters). Of course, in practice this is almost never true, as otherwise variational optimization would proceed via optimizing each reverse light cone independently. However, given a problem Hamiltonian $\bm{H}$ and a shallow ansatz, one can consider a subset of Pauli operators in the Pauli decomposition of $\bm{H}$ such that their reverse light cones do not overlap. There is little reason to believe that the loss landscape of this simplified problem should be any more difficult to optimize over than the full problem. We therefore suspect that this assumption is little more than a technical requirement. A similar generalization one could consider is taking the parameters of each $\bm{V_i}$ to being almost entirely independent of one another (though not entirely independent, as one could then optimize each subproblem independently, and $n$ would no longer an accurate measure of the size of the problem). However, in this regime we expect the ``effective'' overparameterization ratio $\gamma$ to go as $\left(\frac{l}{2^l}\right)^A$, as the problem essentially reduces to simultaneously optimizing $A$ loss functions. For $A\sim n$, for instance, this decays exponentially in $n$, and thus we believe that models of this form are also not trainable.

\section{Additional Numerical Experiments}
\subsection{Teacher Student Learning with Checkerboard Ansatz}

\begin{figure*}[t]
        \centering
        \includegraphics[]{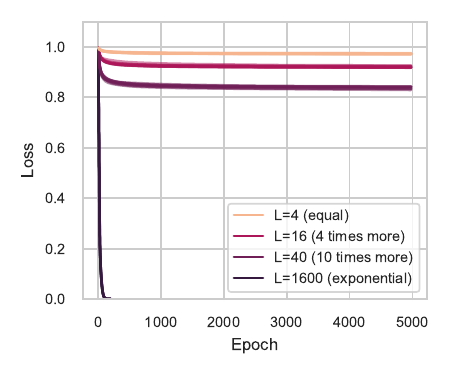}
    \caption{Teacher-student performance evaluation for the depth $L$ checkerboard ansatz. Exponential depth is needed to overparamaterize a model to successfully learn a random circuit of the same form. Here, for each student circuit depth denoted by $L$, $10$ randomly initialized $8$ qubit student circuits are trained to learn a random $L=4$ layer teacher circuit drawn from the same ansatz and parameter distribution.}
        \label{fig:checkerboard}
\end{figure*}

One particular challenge with quantum variational learning is that an overparameterized model needs more parameters than the dimension of the quantum input state (exponential in the number of qubits), whereas classically, overparameterization with respect to the size of the data set typically suffices \cite{choromanska2015loss,li2018learning,arora2019exact}. To illustrate this phenomenon, we consider learning states generated by random shallow checkerboard circuits (denoted the teacher circuit) using checkerboard circuits of the same or more depth (denoted the student circuit). The data set used to train the circuit consist of 512 pairs of inputs randomly drawn from computational basis states with their corresponding output state taken from applying the input state to the teacher circuit. We use the loss $\ell(\ket{\psi},\ket{\phi}) = 1-| \braket{\psi}{\phi}|^2$ to measure the success of learning. Note that, though this is a global loss metric, gradients are analytically calculated to precision sufficient enough to obtain accurate values of the gradients for the relatively small number of qubits considered here.

As shown in \Cref{fig:checkerboard}, exponential depth (and number of parameters) is needed to always successfully learn the data generated by a shallow checkerboard circuit of $4$ layers. We considered ansatzes only over $8$ qubits, which is small enough to be able to feasibly overparameterize the models in our simulations. For fewer qubits and shallower circuits, we found that learning with equal numbers of qubits and layers was sometimes successful; but unsurprisingly, as we show in the main text, learning becomes much harder as qubits are added.

\subsection{Random VQE model}

\begin{figure*}[t]
\centering
    \includegraphics[]{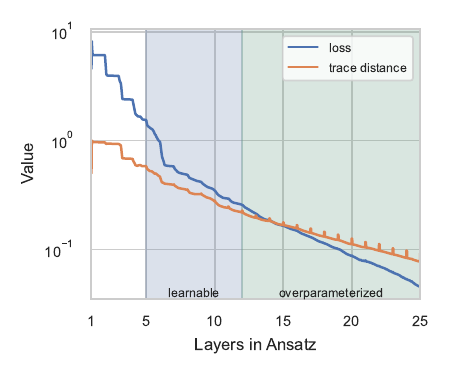}
\caption{When optimizing in a layer-wise fashion, the VQE algorithm converges to a local minimum at each layer until the overparameterized regime where the loss function steadily decreases regardless of the number of layers. Even in the learnable regime where the checkerboard ansatz is capable of expressing the global minima, the ansatz is still unable to find the correct parameters for this global minimum. Bumps in the loss function appear due to small instabilities in training immediately after adding a layer.}
 \label{fig:VQE_final_state_adapt}
\end{figure*}

Here, we empirically analyze the performance of a layer-wise optimizer trained on the random VQE task in the main text. In \Cref{fig:VQE_final_state_adapt}, we train an $11$ qubit ansatz using a layer-wise optimizer \cite{skolik2021layerwise,grimsley2019adaptive}, which initially trains a single layer of the ansatz and adds layers after every 5000 steps to continually add expressiveness. The target Hamiltonian $\mH_t$ here has $4$ layers of perturbations applied to it. Although layer-wise optimizers can avoid issues with barren plateaus \cite{skolik2021layerwise}, our numerical findings clearly show that this does not guarantee the algorithm will avoid traps in the landscape. After $5$ layers, the ansatz has enough parameters to capably express the global optimum (denoted by the label ``learnable''), but nevertheless stalls in optimizing to the ground state. Not until there are at least 12 layers, enough to overparameterize the ansatz with respect to the Hilbert space dimension, does learning smoothly converge to the globally optimal solution.

\subsection{XYZ Hamiltonian Model}
\label{app:XYZ_hamiltonian_section}
\change{
All of the numerical experiments performed elsewhere studied settings where the optimization was performed to numerical precision, two qubit gates were fully parameterized, and the existence of a global minimum at zero loss was guaranteed. The analysis there focused on answering the question of whether convergence to the global minimum is empirically likely to be observed. To study a more realistic setting where such favorable conditions cannot be guaranteed but there still exists hope of some good convergence properties, we turn now to the problem of trying to variationally obtain the ground state of an approximately translationally invariant Heisenberg XYZ Hamiltonian~\cite{heisenberg1928}. Similar Hamiltonians have been studied and analyzed in previous works related to VQE \cite{PhysRevA.92.042303,cade2020strategies}. We perform experiments both with and without Gaussian noise added to the gradients to account for shot noise on a quantum computer. 
}

\change{
The particular target Hamiltonian we aim to optimize is one where qubits are placed on a 2-dimensional grid and interaction terms take place between neighboring qubits. The Hamiltonian takes the form
\begin{equation}
    \mH = \sum_i \mZ_i + \sum_{\langle i,j\rangle} \alpha_{ij} \mZ_i \otimes \mZ_j + \sum_{\langle i,j\rangle} \beta_{ij} \left( \mX_i \otimes \mX_j + 0.66 \mY_i \otimes \mY_j\right),
\end{equation}
where $\langle i,j\rangle$ sums over the neighboring qubits $i$ and $j$ in the grid and $\alpha_{ij}$ and $\beta_{ij}$ are random numbers drawn from the normal distribution with standard deviations set to $0.25$ and means set to $1$ and $3$, respectively. 
}
\begin{table}[ht]
    \centering
\begin{tabular}{llrrrcrrr}
\toprule
   & {} & \multicolumn{3}{l}{energy error} & & \multicolumn{3}{l}{trace distance} \\
   & grid size & $3 \times 2$ & $5 \times 2$ & $7 \times 2$ & &  $3 \times 2$ & $5 \times 2$ & $7 \times 2$ \\
layers & shots &              &              &              &                &              &              \\
\midrule
3  & 10000 &        0.459 &        0.512 &        0.504 &   &       0.983 &        0.999 &        1.000 \\
   & 400 &        0.456 &        0.501 &        0.392 &     &     0.983 &        0.999 &        1.000 \\
   & inf &        0.466 &        0.512 &        0.395 &      &    0.981 &        0.999 &        1.000 \\
9  & 10000 &        0.269 &        0.358 &        0.351 &     &     0.750 &        0.965 &        0.998 \\
   & 400 &        0.343 &        0.434 &        0.386 &        &  0.845 &        0.991 &        0.994 \\
   & inf &        0.245 &        0.350 &        0.344 &         & 0.659 &        0.924 &        0.993 \\
15 & 10000 &        0.104 &        0.293 &        0.303 &        &  0.428 &        0.894 &        0.997 \\
   & 400 &        0.180 &        0.356 &        0.318 &     &     0.577 &        0.965 &        0.987 \\
   & inf &        0.054 &        0.244 &        0.251 &      &    0.293 &        0.842 &        0.968 \\
21 & 10000 &        0.008 &        0.201 &        0.214 &     &     0.162 &        0.799 &        0.982 \\
   & 400 &        0.043 &        0.277 &        0.247 &        &  0.269 &        0.882 &        0.984 \\
   & inf &        0.011 &        0.178 &        0.162 &     &     0.151 &        0.747 &        0.933 \\
27 & 10000 &        0.009 &        0.177 &        0.200 &    &      0.152 &        0.752 &        0.976 \\
   & 400 &        0.034 &        0.254 &        0.251 &       &   0.244 &        0.882 &        0.976 \\
   & inf &        0.010 &        0.122 &        0.129 &        &  0.136 &        0.663 &        0.948 \\
\bottomrule
\end{tabular}
\caption{\change{Error in energy from the ground state (normalized by the magnitude of the ground state energy), and trace distance from the ground state, of a VQE optimizing the Heisenberg XYZ model. Results are averaged across 12 random initializations of the experiment for each entry in the table. Note the poor performance of VQE, particularly at the larger problem sizes.}}
\label{tab:XYZ_hamiltonian}
\end{table}

\change{
As shown in \Cref{tab:XYZ_hamiltonian}, finding the ground state of the XYZ hamiltonian is in general challenging using the ansatz considered. For few layers, the ansatz is not expressible enough to find the target and converges to a poor critical point. For many layers, the VQE algorithm tends to converge to a better optimum, but issues with barren plateaus can begin to arise as indicated by the comparison in performance with assuming infinite shots vs. finite shots. 
}

\section{Details of Numerical Experiments}
\label{app:numerical_experiment_details}

\begin{table}
    \begin{center}
        \small
    \begin{tabular}[t]{@{}llllll@{}}
    \toprule
    Ansatz       & Experiment         & \# Parameters & Optimizer  & Learning Rate                                                                                    \\ \midrule
    QCNN         & Teacher-Student    & $16\ceil{\log_2 n}=\operatorname{O}(\log n)$             & Adam       & 0.001                                                                                            \\
    Checkerboard & Teacher-Student    & $32L\floor{\frac{n}{2}} = \operatorname{O}(nL)$            & Adam       & \begin{tabular}[t]{@{}l@{}}0.001 (underparameterized)\\ 0.0001 (overparameterized)\end{tabular} \\
                 & Random VQE (GD)           &   $128\floor{\frac{n}{2}} = \operatorname{O}(n)$  & vanilla GD & 0.01                                                                                             \\
                 & Random VQE (Adam) & $128\floor{\frac{n}{2}} = \operatorname{O}(n)$  & Adam       & 0.003                                                                                            \\
                 & Adaptive VQE       & $160L = \operatorname{O}(L)$  & Adam       & 0.002 (5\% reduction each layer)     \\
    XYZ ansatz & XYZ Hamiltonian VQE & $7\floor{\frac{L}{3}} = \operatorname{O}(L)$ & Adam & 0.007 (halved every 1000 steps)      
    \\ \bottomrule
    \end{tabular}
    \caption{List of parameter counts, optimizers, and learning rates for the various ansatzes and experiments. $L$ denotes the number of layers and $n$ the number of qubits. }
    \label{tab:hyperparameters}
    \end{center}
\end{table}

All experiments were performed in Python using the Pytorch \cite{NEURIPS2019_9015} package to perform automatic differentiation. Computation was performed on Nvidia RTX\texttrademark$ $ A6000 GPUs. Important hyperparameters for the experiments are listed in \Cref{tab:hyperparameters}. Unless otherwise stated, all gradients were calculated using analytic formulas for automatic differentation with computer precision (32 bit floating point). Therefore, issues with decaying gradients and barren plateaus do not appear in these simulations for the relatively small number of qubits considered. Gradient based optimization was performed using vanilla gradient descent or the Adam optimizer \cite{kingma2014adam}, a popular and effective algorithm for training deep neural networks. We tested other optimizers as well and found no noticeable difference in performance.

\paragraph{Loss surface plot} To generate this plot, we chart the loss landscape at initialization of training in the teacher-student setup of the main text for the $14$ qubit QCNN circuit. The teacher and student circuit were both initialized as described in \Cref{app:qcnn_experimental_backup_details}. 

The loss is plotted along two normalized directions of the parameter landscape. Normalization is applied individually to the $3$ filters of the $14$ qubit QCNN. We loosely follow the ``filter-wise" normalization strategy of \cite{li2018visualizing}, where we first generate a random direction by drawing a value for each parameter from an i.i.d. standard normal distribution. Then, we divide values for the parameters in a given layer by the Frobenius norm of the matrix for the corresponding layer. 

\begin{figure*}[!ht]
 \captionsetup[subfigure]{aboveskip=-1pt,belowskip=-3pt}  
 \begin{subfigure}{0.5\textwidth}
    \includegraphics[trim={2.0cm 1.5cm 2.0cm 2.0cm},clip,width=\linewidth]{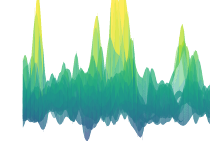}
    \label{fig:surface_backup_3d}
  \end{subfigure}%
  \hspace*{\fill}   
 \begin{subfigure}{0.5\textwidth}
    \includegraphics[width=\linewidth]{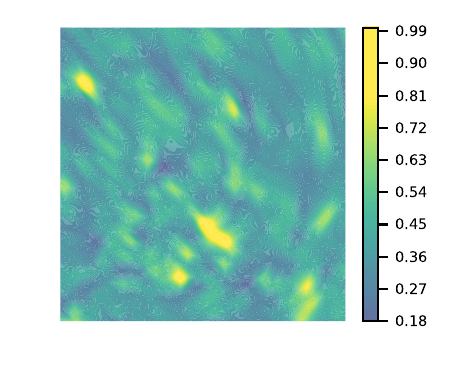}
    \label{fig:surface_backup_contour}
  \end{subfigure}%
  \hspace*{\fill}   
\caption{Loss landscape of the QCNN experiment replicated from the main text, except the initialization of the student circuit is randomly chosen. Here, the global minimum is likely far away and the landscape also appears ``bumpy''; all local minima in the region considered here are far from the global optimum.} \label{fig:surface_backup_both}
\end{figure*}

In the loss surface plot of the main text, we plot the mean squared error loss for the teacher-student task for a batch size of 128 randomly chosen computational basis states. The legend in the plot is shown relative to the maximum value of the loss in the range considered. A value of $0$ here corresponds to the loss at the global minimum. The middle of the plot corresponds to the exact parameters of the teacher circuit, and hence, is a global minimum. This setting is, in a sense, an optimistic setting since initialization is near a global minimum. For comparison, we include in \Cref{fig:surface_backup_both} an example of a loss surface where the student circuit is not initialized near the parameters of the teacher circuit. As is evident in this setting, no longer is there a global minimum in the parameter region considered, and the landscape also appears to be filled with traps.

\subsection{QCNN Experiments}
\label{app:qcnn_experimental_backup_details}

\begin{figure}[!ht]
    \centering
    \includegraphics[]{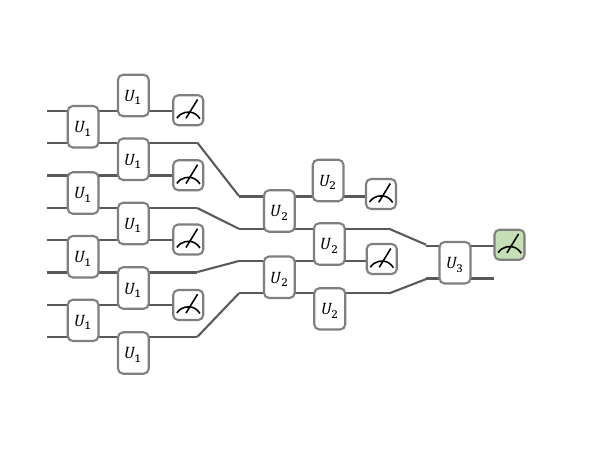}
    \caption{Circuit drawing of QCNN ansatz for 8 qubits. Layers of shared $2$-local unitary transformations are applied followed by measurement of every other qubit. Gates at the edge of the circuit above are applied in a cyclic fashion (i.e. the top and bottom qubit interact). The measurement colored in green is the measurement outcome whose probability we aim to predict in the teacher-student setup. Generically for $n$ qubits, this ansatz has depth $\ceil{\log_2 n}$. During training, the $2$-local unitaries are fully parameterized for our simulations. }
    \label{fig:qcnn_circuit_ansatz}
\end{figure}

The quantum convolutional neural network (QCNN) is an ansatz originally proposed in \cite{cong2019}. This ansatz features parameter sharing across gates in a single layer. The form of this circuit is provided in \Cref{fig:qcnn_circuit_ansatz}. In our experiments, we use the same form of the $2$-local ansatz as in \cite{cong2019} and also studied in \cite{pesah2021absence}. Between convolutional layers, we include no controlled unitary operations based on the measurement outcomes. In learning settings, we fully parameterize the $2$-local unitaries in the skew Hermitian basis of the unitary Lie algebra. To achieve this, we train directly over parameter entries of a matrix $\mM$ and apply $\ce^{\mH}$, where $\mH=\mM - \mM^\dagger$, to perform the resulting unitary transformation. Entries of the matrix $\mM$ were initialized i.i.d. from a standard normal distribution.

For the teacher-student experiments in the main text, we aim to predict the outcome of the final green measurement depicted in \Cref{fig:qcnn_circuit_ansatz} for $512$ randomly chosen computational basis states. For $n$ qubits, the QCNN ansatz for both the teacher and student circuits have $16\ceil{\log_2 n}$ parameters which is a relatively small number compared to the dimension of the Hilbert space. All networks were trained for 5000 epochs and a learning rate of $0.001$ using the Adam optimizer.

\subsection{Checkerboard Ansatz}
\label{app:checkerboard_experimental_backup_details}

\begin{figure}[!ht]
    \centering
    \includegraphics[]{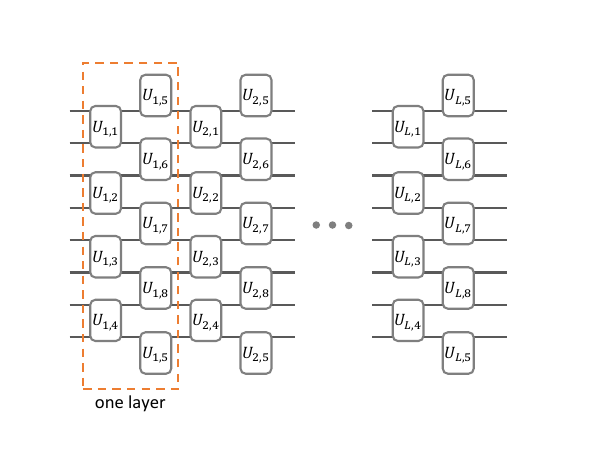}
    \caption{Circuit drawing of the checkerboard ansatz for 8 qubits and $L$ layers. Gates at the edge of the circuit above are applied in a cyclic fashion (i.e. the top and bottom qubit interact). Generically for $n$ qubits, this ansatz has $32L\floor{n/2}$ parameters. During training, the $2$-local unitaries are fully parameterized for our simulations. }
    \label{fig:checkerboard_circuit_ansatz}
\end{figure}

The checkerboard circuit applies gates in a one-dimensional lattice as shown in \Cref{fig:checkerboard_circuit_ansatz}. As in the QCNN experiments, we train directly over parameter entries of a matrix $\mM$ and apply $e^{\mH}$, where $\mH=\mM - \mM^\dagger$, to perform the resulting unitary transformation. Since the exponential map from the Lie algebra is surjective onto the unitary group, this parameterization is capable of expressing any unitary matrix. Entries of the matrix $\mM$ were initialized i.i.d. from a standard normal distribution.

For the teacher-student simulations of the main text, we train networks over 512 randomly chosen computational basis states which is more than the dimension of the Hilbert space and enough information to recover the full unitary transformation. Optimization was performed using the Adam optimizer and a batch size of 128. Networks were trained for 5000 epochs and training was stopped if the loss fell below 0.001 which only occured for the overparameterized setting. We observed that for fewer than 8 qubits, training was successful with very small probability in the underparameterized setting.

\subsection{VQE Experiments on Random Hamiltonians}
\label{app:VQE_experiment_details}
For all of our VQE experiments, the target Hamiltonian $\mH_t$ was constructed by conjugating a local Hamiltonian of $n$ qubits equal to $\sum_{i=1}^n \mZ_i$ with alternating layers of products of two-qubit unitaries $\mU_1$ and $\mU_2$. That is, $\mH_t$ takes the form below as copied from the main text:
\begin{equation}
    \mH_t = \left( \mU_2^\dagger \mU_1^\dagger \right)^L \left[ \sum_{i=1}^n \mZ_i \right] \left( \mU_1 \mU_2 \right)^L + n\mI.
\end{equation}

$\mU_1$ and $\mU_2$ are the tensor product of two-qubit unitaries which for $n$ even take the form:
\begin{equation}
\begin{split}
    \mU_1 &= \mU_1^{(1,2)} \otimes \mU_1^{(3,4)} \otimes \cdots \otimes \mU_1^{(n-1, n)} \\
    \mU_2 &= \mU_2^{(2,3)} \otimes \mU_2^{(4,5)} \otimes \cdots \otimes \mU_2^{(n, n+1)},
\end{split}
\end{equation}
where superscripts above indicate the pair of qubits each $2$-local unitary acts on and indexing is taken $\text{mod } n$. Each $2$ local unitary is drawn from the distribution $e^{\mH}$, where $\mH=\mG - \mG^\dagger$, and each $\mG$ is a $4 \times 4$ matrix with entries drawn i.i.d. from a random normal distribution. Trained unitaries in the checkerboard ansatz are also initialized in this fashion. Optimization is then performed directly on the entries of the matrix in the Lie algebra which form a complete basis for all of the $2$-local unitaries.

\begin{figure*}[!ht]
\centering
    \includegraphics[]{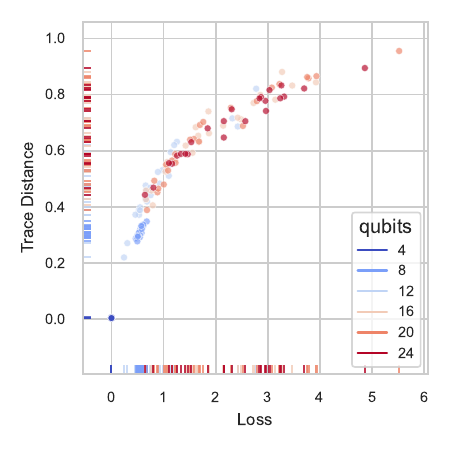}
\caption{Scatter plot showing the values of the loss and trace distance of the final VQE state after 30000 steps of optimization using the Adam optimizer shows that the algorithm converges to poorer local minima as the number of qubits grows. Setting is replicated from the VQE loss experiment from the main text, with the sole change of the optimizer from gradient descent to Adam.} \label{fig:VQE_final_scatter_Adam}
\end{figure*}

In the VQE loss experiment of the main text, each VQE instance was optimized for 30000 steps using a vanilla gradient descent optimizer with a learning rate of 0.01. For completeness, we replicate this plot with the Adam optimizer in \Cref{fig:VQE_final_scatter_Adam} and unsurprisingly observe similar convergence results. All calculations were performed to computer precision, which provides a best-case setting for optimization via real quantum hardware, since gradients and loss function values would have to be calculated using less precise sampling methods on actual quantum computers. In layer-wise VQE experiment of the main text, optimization is performed using an adaptive VQE algorithm similar to the one in \cite{grimsley2019adaptive}. Here, a checkerboard ansatz is initialized as a single layer and optimization is performed layer-wise. We set $n=11$ and small enough such that it is computationally feasible to overparameterize the ansatz. Each 5000 steps of optimization, a layer is added to the ansatz and initialized to the identity mapping. Each additional layer adds $160$ trainable parameters to the ansatz. After each layer is added, the learning rate is multiplied by $0.95$ to make the training more stable with more parameters. At each point in time, all parameters of the ansatz across all layers are trained. For aesthetic purposes and to see the course of training without significant jumps in the plot, we plot a moving average of the values across 10 sequential datapoints in the main text.

\subsection{VQE experiments on XYZ Hamiltonian}
\label{app:XYZ_hamiltonian_ansatz}

\begin{figure}[!ht]
    \centering
    \includegraphics[]{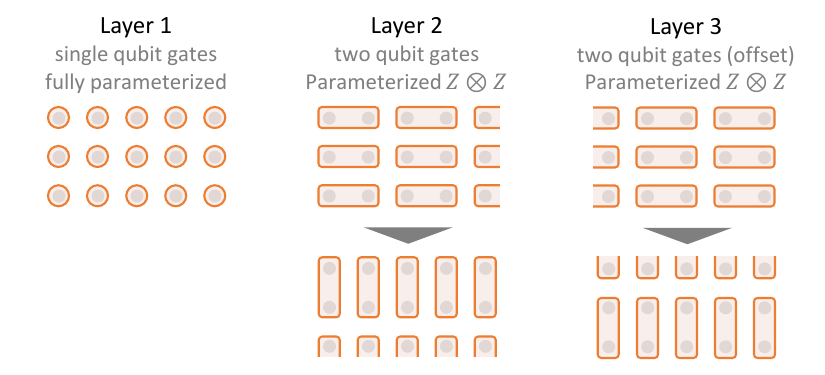}
    \caption{\change{Form of the ansatz used for the XYZ Hamiltonian VQE experiments in \Cref{app:XYZ_hamiltonian_section}. Here, alternating blocks of three layers are composed onto each other. The first layer in each block is a fully parameterized single qubit gate. The next two layers are parameterized Pauli $\mZ \otimes \mZ$ terms to connect all neighboring qubits. Parameters are shared across a layer to better address the near translationally invariance in the model.}}
    \label{fig:XYZ_ansatz}
\end{figure}
\change{
For the XYZ Hamiltonian model empirically analyzed in \Cref{app:XYZ_hamiltonian_section}, we implemented a gate-based ansatz which is fully parameterized in the single qubit gates and parameterized only with Pauli $\mZ \otimes \mZ$ terms for two qubit gates. The form of the ansatz is depicted in \Cref{fig:XYZ_ansatz}. Since the Hamiltonian of the model is approximately translationally invariant in both directions, we implemented sharing of parameters across a layer. Parameters were initialized as random normal variables. Each instance was optimized using the Adam optimizer \cite{kingma2014adam} for 5000 steps. The learning rate was initially set to 0.007 and halved every 1000 steps. For calculations of the trace distance to the ground state, the ground state of the Hamiltonian was calculated by performing an eigendecomposition of the complete Hamiltonian. To account for shot noise, random centered Gaussian noise with standard deviation equal to $1/\sqrt{\text{\# shots}}$ was added to gradients with respect to the parameters.
}

\section{Untrainability Beyond Gradient Descent}\label{app:beyond_gradient_descent}

One may wish to avoid local minima by changing the loss function or performing more advanced versions of gradient-based optimizers. Here, we gives heuristic reasons why these two adjustments will likely not fix any issues of untrainability.

First, we examine changes in the loss function. This is commonly done to avoid barren plateaus and make gradients easier to compute. Let us assume that $\mathcal{L}(\bm{\theta})$ is our original loss function (as a function of the parameters $\bm{\theta}$), which is changed to a new loss function $\tilde{\mathcal{L}}(\bm{\theta})$. Typically, $\tilde{\mathcal{L}}(\bm{\theta})$ is chosen so that it upper and lower bounds $\mathcal{L}(\bm{\theta})$, i.e. $C\tilde{\mathcal{L}}(\bm{\theta}) \leq \mathcal{L}(\bm{\theta}) \leq D\tilde{\mathcal{L}}(\bm{\theta})$ for some constants $C,D$. This guarantees convergence in both metrics when changing the loss function and is the case for e.g. local versions of the inner product and the quantum earth mover's (EM) distance~\cite{9420734,kiani2021}. Now, let us assume that every continuous path from a local minimum at $\bm{\theta_{l}}$ to the global minimum $\bm{\theta^*}$ must increase the loss function by a factor $M>D/C$, i.e. there exists a point in the path that has value at least $M\mathcal{L}(\bm{\theta_l})$. Then, in the new loss function $\tilde{\mathcal{L}}(\bm{\theta_l}) \leq \mathcal{L}(\bm{\theta_l})/C$. Furthermore, at some point in any continuous path, $\mathcal{L}(\bm{\theta})>M\mathcal{L}(\bm{\theta_l})$ which implies that at that point $\tilde{\mathcal{L}}(\bm{\theta}) \geq \mathcal{L}(\bm{\theta})/D>M\mathcal{L}(\bm{\theta_l})/D=\mathcal{L}(\bm{\theta_l})/C$. Thus, $\bm{\theta_l}$ is not within a convex region around the global optimum. This may be too restrictive of an assumption since local minima can often be very shallow, but it also seems to be backed by experiments.

Second, we consider changing the optimization algorithm to a second order optimization algorithm such as in \cite{stokes2020quantum}. These algorithms perform gradient descent by applying a transformation to the gradient of the form:
\begin{equation}
    \bm{\theta_{t+1}} = \bm{\theta_t} - \mu \bm{\varSigma^{+}}\bm{\nabla}_{\bm{\theta_{t}}} \mathcal{L}(\theta_t),
\end{equation}
where $\bm{\varSigma}$ incorporates our second order term, e.g. the Hessian or Fubini--Study metric tensor, and $\bm{\varSigma^{+}}$ is its pseudoinverse. Clearly, in the above, this does not allow one to escape a local minima. Setting $\bm{\theta_t} = \bm{\theta^*}$ above sets the gradient term to zero, and one again one is stuck in a local minimum.

Though other training methods exist, it is not clear \textit{a priori} why they should succeed. For example, training in a layer-wise fashion also does not work as \cite{kiani2021,skolik2021layerwise,campos2021abrupt} show. Finally, note that the above methods can be very effective at alleviating barren plateaus. In fact, changes in metric and second order optimization methods are often precisely designed to fix this issue. Nevertheless, these methods only provably converge to the global optimum in convex or close to convex settings, which is not the case for essentially all variational quantum models.

\bibliography{main}

\end{document}